\newcommand{\R}{\mathbb{R}}
\newcommand{\N}{\mathbb{N}}
\newcommand{\T}{\top}
\newcommand{\I}{\mathbf{I}}
\newcommand{\0}{\mathbf{0}}
\newcommand{\X}{\mathcal{X}}
\newcommand{\diag}{\text{diag}}
\newcommand{\tsup}[1]{\textsuperscript{#1}}
\newcommand{\mb}[1]{\mathbf{#1}}
\newcommand{\bm}[1]{\begin{bmatrix}#1\end{bmatrix}}
\newcommand{\e}{\mathrm{e}}
\newcommand{\leqew}{\leq\hspace{-1ex}\raisebox{0.2ex}{$<$}\,}
\newcommand{\multiline}[1]{  \begin{tabularx}{\dimexpr\linewidth-\ALG@thistlm}[t]{@{}X@{}}
#1
\end{tabularx}
}
\def\BibTeX{{\rm B\kern-.05em{\sc i\kern-.025em b}\kern-.08em
    T\kern-.1667em\lower.7ex\hbox{E}\kern-.125emX}}
\DeclareMathAlphabet\mathbfcal{OMS}{cmsy}{b}{n}
\newtheorem{assumption}{\textbf{Assumption}}
\newtheorem{theorem}{\textbf{Theorem}}
\newtheorem{corollary}{\textbf{Corollary}}
\newtheorem{lemma}{\textbf{Lemma}}
\newtheorem{proposition}{\textbf{Proposition}}
\newtheorem{remark}{\textbf{Remark}}
\newtheorem{definition}{\textbf{Definition}}
\newtheorem{problem}{\textbf{Problem}}
\begin{document}

\title{
{\LARGE \textbf{
Hierarchical Analysis and Control of Epidemic Spreading over Networks using Dissipativity and Mesh Stability
}}}


\author{Shirantha Welikala and Hai Lin and Panos J. Antsaklis 
\thanks{The support of the National Science Foundation (Grant No. CNS-1830335, IIS-2007949) is gratefully acknowledged.}
\thanks{Shirantha Welikala is with the Department of Electrical and Computer Engineering, Stevens Institute of Technology, Hoboken, NJ \texttt{{\small swelikal@stevens.edu}}. Hai Lin and Panos J. Antsaklis are with the Department of Electrical Engineering, University of Notre Dame, Notre Dame, IN \texttt{{\small \{hlin1,pantsaklis\}@nd.edu}}.}}

\maketitle


\begin{abstract}
Analyzing and controlling spreading processes are challenging problems due to the involved non-linear node (subsystem) dynamics, unknown disturbances, complex interconnections, and the large-scale and multi-level nature of the problems. The dissipativity concept provides a practical framework for addressing such concerns, thanks to the energy-based representation it offers for subsystems and the compositional properties it provides for the analysis and control of interconnected (networked) systems comprised of such subsystems. Therefore, in this paper, we utilize the dissipativity concept to analyze and control a spreading process that occurs over a hierarchy of nodes, groups, and a network (i.e., a spreading network). We start by generalizing several existing results on dissipativity-based topology design for networked systems. Next, we model the considered spreading network as a networked system and establish the dissipativity properties of its nodes. The generalized topology design method is then applied at multiple levels of the considered spreading network to formulate its analysis and control problems as Linear Matrix Inequality (LMI) problems. We identify and enforce localized necessary conditions to support the feasibility of the LMI problem solved at each subsequent hierarchical level of the spreading network. Consequently, the proposed method does not involve iterative multi-level optimization stages that are computationally inefficient. The proposed control solution ensures that the spreading network is not only stable but also dissipative and mesh-stable (i.e., it also minimizes the impact and propagation of disturbances affecting the spreading network). Compared to conventional methods, such as threshold pruning and high-degree edge removal, our approach offers superior performance in terms of infection containment, control efficiency, and disturbance robustness. Extensive numerical results demonstrate the effectiveness of the proposed technique.
\end{abstract}

\noindent 
\textbf{Index Terms}—\textbf{Dissipativity-Based Control, Spreading Processes, Epidemics Control, Spreading Networks Design.}

\section{Introduction} 

The study of epidemic spreading processes occurring over large-scale interconnected systems (networks) is a crucial topic in epidemiology \cite{Hethcote2000,Pastor-Satorras2001}, which also influences the planning of resources like social services, healthcare systems, and transportation systems \cite{Barrett2009,Nowzari2016}. 
Moreover, epidemic spreading processes, in terms of their underlying dynamics, share many similarities with the information and opinion spreading processes that occur in computer and social networks, respectively \cite{Sivaraman2023}. 
Clearly, understanding epidemic spreading processes over networks, particularly how infections propagate across multiple groups within a network due to inter-group interactions among nodes in those groups, is vital for designing effective control mechanisms (e.g., reducing inter-group interactions by a certain margin) to prevent undesirable epidemic outbreaks \cite{Wang2024}.
Consequently, such findings on the analysis and control of epidemic spreading processes over networks will have direct consequences across various application domains, including social services, healthcare systems, transportation systems, computer networks, and social networks, particularly for their planning, analysis, and control purposes. 

The networks over which spreading processes occur are typically modeled as graphs, where nodes represent entities (e.g., individuals, households, local communities) and edges represent interactions that facilitate the spreading process \cite{Hethcote2000,Pastor-Satorras2001}. However, the recent research promotes modeling such networks using hierarchical graphs to efficiently handle their large-scale and multi-level nature \cite{Wang2024b,Tian2022}. Adopting such a hierarchical modeling paradigm enables the hierarchical analysis and control of the epidemic spreading process, e.g., at local, regional, and global levels within the network. However, designing such hierarchical analysis and control strategies remains challenging due to the involved nonlinear node dynamics, unknown disturbances, their complex interconnections, and compositionality and scalability concerns. 

Conventional techniques for controlling a spreading process over a network (spreading network) typically focus on modifying the network topology to influence the propagation of infections and/or enhance the robustness of the spreading network \cite{Xue2019,Doostmohammadian2023}. As an alternative to such non-pharmaceutical intervention-based spreading network control methods, some works also consider pharmaceutical intervention-based methods where node characteristics are modified to accelerate local healing (recovery) rates \cite{Gou2017,Preciado2014}. However, such interventions require extensive resources, such as targeted healthcare programs (e.g., vaccinations), as well as preparation time. Therefore, non-pharmaceutical intervention-based methods are more popular and widely studied for controlling spreading networks.

Two commonly used non-pharmaceutical intervention-based spreading network control methods are threshold pruning and high-degree edge removal \cite{Mieghem2011,Schneider2011}. Threshold pruning eliminates inter-group connections that exceed a specified threshold, thereby removing the most impactful pathways for the spread of infection \cite{Enns2015}. High-degree edge removal eliminates outgoing connections from nodes with high inter-group connectivity, thereby limiting the influence of highly connected nodes \cite{Xu2021c}. While these non-pharmaceutical intervention-based spreading network control methods are very efficient and practical, they are typically heuristic and static. Therefore, they fail to adequately account for the complex dynamics and external disturbances present in a real-world spreading network, and they do not provide guarantees regarding the stability or robustness of the spreading network.

In contrast to such static spreading network control methods, dynamic feedback control methods model the spreading network as a dynamical system and design state-dependent control laws that suppress the spread of infection. The work in \cite{Zhang2024} proposes a state feedback controller, along with an observer, to tune the effective infection/recovery dynamics. The work in \cite{Walsh2025} proposes a decentralized adaptive gain controllers that provably drive the spreading network to a disease-free state using only local measurements. 
To reduce communication and computation requirements while retaining stability properties in spreading network control, the work in \cite{Hashimoto2021} has proposed an event-triggered and sparse update control strategy. There also exist spreading network control methods that exploit distributed optimal control \cite{Kovacevic2022} and model predictive control \cite{Carli2020}.

While the aforementioned dynamic feedback control methods account for the underlying complex dynamics, improve the responsiveness and disturbance robustness, and provide formal stability guarantees under specific conditions, they face several \underline{B}asic \underline{C}hallenges: 
(BC1) Compositionality and scalability challenges in both control design and operation stages when applied to large-scale hierarchical spreading networks; 
(BC2) Lack of rigorous disturbance robustness guarantees while accounting for both non-linear node dynamics and their complex hierarchical interconnection structure.    
(BC3) Lack of flexibility to efficiently co-design the spreading network topology and specific control efforts. 
To address these challenges BC1-BC3, inspired by \cite{SheHale2024}, we propose a dissipativity-based control strategy for spreading networks.

Dissipativity theory provides a compositional and scalable energy-based framework for analyzing and controlling generic networked systems \cite{WelikalaP52022}. The underlying idea is to first analyze/enforce each subsystem of the networked system with a storage function (representing its stored energy) - the rate of which is bounded by a quadratic supply rate function (representing its input-output power flow). Then, such functions are composed to analyze/enforce storage and supply rate functions for the networked system. By analyzing/enforcing the validity of the networked system storage function under different supply rate functions, corresponding input-output certificates, such as passivity and $L_2$-stability (which imply disturbance robustness guarantees), can be analyzed/enforced for the networked system. Hence, dissipativity theory also enables the analysis and enforcement of disturbance robustness guarantees for networked systems analyzing and controlling generic networked systems \cite{WelikalaP52022}. Moreover, as shown in \cite{WelikalaP52022}, dissipativity theory also enables efficient co-design of topologies and controllers for generic networked systems. In all, dissipativity theory has proven to be an effective tool for addressing similar challenges to those identified above in BC1-BC3 when analyzing and controlling generic networked systems \cite{WelikalaP52022}.

The theoretical results identified in \cite{WelikalaP52022} have already been significantly specialized and extended to address specific analysis and design problems arising in networked cyber-physical systems such as platoons \cite{WelikalaP72023}, supply chains \cite{Welikala2025Ax1}, and microgrids \cite{Najafirad2025P1}. However, so far, they have not been applied to handle the analysis and control of spreading networks due to several \underline{A}pplication-specific \underline{C}hallenges: 
(AC1) Node dynamics are non-linear and cannot be altered through local controllers (without pharmaceutical interventions); 
(AC2) Multi-level hierarchical nature of the spreading networks;
(AC3) Lack of established disturbance modeling approaches and the need for advanced disturbance robustness guarantees like Scalable Mesh Stability (SMS) \cite{Mirabilio2022}; and
(AC4) Existing hard constraints on the designed spreading network topology (with respect to the nominal topology). 
Therefore, we dedicate this paper to addressing the application-specific challenges, AC1-AC4, in order to develop a dissipativity-based analysis and control technique for spreading networks that can overcome the said basic challenges, BC1-BC3.

Our contributions can be summarized as follows.
\begin{enumerate}[leftmargin=*]
\item We model epidemic spreading processes (i.e., spreading networks) as hierarchical networked systems (comprised of, not limited to, groups and nodes) with disturbance inputs and performance outputs. 
\item We derive dissipativity properties of the nodes subject to disturbances present in the spreading network, which are then subsequently used to analyze the dissipativity of any group of such nodes, and of any network of such groups. 
\item We also propose a spreading network design (i.e., control) technique to enforce dissipativity (which includes stability and disturbance robustness) by optimally modifying the network topology subject to any existing hard constraints with respect to the nominal topology. 
\item We provide how advanced disturbance robustness properties like scalable mesh stability (SMS) \cite{Mirabilio2022} can be analyzed or enforced for both generic networked systems and spreading networked systems.  
\item The proposed analysis and control techniques do not involve solving computationally expensive iterative multi-level optimization problems.  Instead, we strategically include specifically designed necessary conditions at each optimization problem to support the feasibility of the optimization problem solved at the subsequent level. 
\item All optimization problems take the form of convex Linear Matrix Inequality (LMI) problems, and hence they can be solved efficiently using existing software toolboxes \cite{Lofberg2004}. 
\item We provide clear implementation details and extensive numerical results that showcase the benefits of the proposed method and compare it with two basic spreading network control methods.
\end{enumerate}

This paper is structured as follows. Section \ref{Sec:Preliminaries} provides necessary preliminary results on dissipativity and networked systems, and they are further extended in Section \ref{Sec:Preliminaries2}. Section \ref{Sec:Formulation} presents the spreading network dynamics and formulates the analysis and control (design) problems. The proposed dissipativity-based analysis and control techniques are detailed in Section \ref{Sec:Solution}. Simulation results are presented in Section \ref{Sec:Simulation} before concluding the paper in Section \ref{Sec:Conclusion}. 

\section{Preliminaries}
\label{Sec:Preliminaries}

We start by providing several preliminary results that will be extended and utilized in the later sections of this paper. 

\subsection{Notations}
We use $\mathbb{R}$ and $\mathbb{N}$ to denote sets of real and natural numbers, respectively. For any $N\in\mathbb{N}$, we define $\mathbb{N}_N\triangleq\{1,2,..,N\}$.
An $n \times m$ block matrix $A$ is denoted either as $A = [A_{ij}]_{i \in \mathbb{N}_n, j \in \mathbb{N}_m}$ or $A = [A^{ij}]_{i \in \mathbb{N}_n, j \in \mathbb{N}_m}$. We consider $[A_{ij}]_{j\in\mathbb{N}_m}$ and $\diag([A_{ii}]_{i\in\mathbb{N}_n})$ to be a block row matrix and a block diagonal matrix, respectively.
$\0$ and $\I$ respectively denote zero and identity matrices (their dimensions will be clear from the context). A symmetric positive definite (semi-definite) matrix $A\in\mathbb{R}^{n\times n}$ is denoted by $A>0\ (A\geq0)$. We use the symbol $\star$ to represent conjugate matrices in symmetric matrices (their meaning will be clear from the context). We also define $\mathcal{H}(A) \triangleq A + A^\T$ (where $A^\T$ is the transpose of $A$), $\mb{1}_{\{ \cdot \}}$ as the indicator function and $\e_{ij} \triangleq \mb{1}_{\{i=j\}}$. 
The notations $\mathcal{K}_\infty$ and $\mathcal{KL}$ denote respective class-$\mathcal{K}$ functions \cite{Sontag1995}.   
We use $\vert \cdot \vert$ to denote the Euclidean (spectral) norm of a vector (matrix), and $\Vert \cdot \Vert$ to denote any standard norm (note $\vert \cdot \vert = \Vert \cdot \Vert_2$). For a time-varying vector (i.e., a signal) $x(\cdot) \triangleq \{x(t) \in \R^n\}_{t \geq 0}$, $\mathcal{L}_2$ and $\mathcal{L}_{\infty}$ norms are given by $\Vert x(\cdot) \Vert  \triangleq \sqrt{\int_{0}^{\infty} \vert x(t)\vert^2dt}$ and $\Vert x(\cdot) \Vert_{\infty} \triangleq \sup_{t\geq 0} |x(t)|$, respectively. 
We define $\overline{\lambda}(A)$ and $\underline{\lambda}(A)$ respectively as the largest and the smallest eigenvalues of $A$. We use $\leqew$ to denote the element-wise version of the inequality $\leq$. For notational convenience, we will omit explicitly denoting the time dependence of variables in some obvious places.

\subsection{Dissipativity}
Consider the non-linear dynamical system described by
\begin{equation}\label{Eq:GeneralSystem}
\begin{aligned}
\dot{x}(t) = f(x(t),u(t)),\\
y(t) = h(x(t),u(t)),
\end{aligned}
\end{equation}
where $x(t)\in\mathbb{R}^{n_x}$, $u(t)\in\mathbb{R}^{n_u}$ and $y(t)\in\mathbb{R}^{n_y}$ denote the state, input and output, respectively. Functions $f:\mathbb{R}^{n_x} \times \mathbb{R}^{n_u}\rightarrow\mathbb{R}^{n_x}$ and $h:\mathbb{R}^{n_x}\times\mathbb{R}^{n_u}\rightarrow\mathbb{R}^{n_y}$ are continuously differentiable, and satisfy $f(\0,\0) = \0$ and $h(\0,\0)= \0$. 

\begin{definition}\label{Def:Dissipativity}
The system \eqref{Eq:GeneralSystem} is called \emph{dissipative} under a supply rate function $S:\mathbb{R}^{n_u}\times\mathbb{R}^{n_y}\rightarrow\mathbb{R}$ if there exists a continuously differentiable storage function $V:\mathbb{R}^{n_x}\rightarrow\mathbb{R}$ such that  $V(x)>0$ when $x\neq 0$, $V(0)=0$, and 
\begin{center}
    $\dot{V}(x)=\nabla_xV(x)f(x,u)\leq S(u,y)$,
\end{center}
for all trajectories $(x,u)\in\mathbb{R}^n\times\mathbb{R}^q$ of the system.
\end{definition}

Based on the used supply rate function $S(u,y)$, the dissipativity property defined in Def. \ref{Def:Dissipativity} can be further specialized. The $X$-dissipativity \cite{WelikalaP52022} (also known as $(Q,S,R)$-dissipativity \cite{Willems1972a,Kottenstette2014}), defined in the sequel, uses a quadratic supply rate function determined by a coefficient matrix $X=X^\top \triangleq [X^{kl}]_{k,l\in\mathbb{N}_2} \in \mathbb{R}^{(n_u + n_y)\times (n_u + n_y)}$.

\begin{definition}\label{Def:X-Dissipativity}
The system \eqref{Eq:GeneralSystem} is $X$-dissipative if it is dissipative under the quadratic supply rate function:
\begin{equation*}
S(u,y)\triangleq
\begin{bmatrix}
    u \\ y
\end{bmatrix}^\top
\begin{bmatrix}
    X^{11} & X^{12}\\ X^{21} & X^{22}
\end{bmatrix}
\begin{bmatrix}
    u \\ y
\end{bmatrix}.
\end{equation*}
\end{definition}

The notion of $X$-dissipativity, as outlined in the following remark, captures several important input-output properties.

\begin{remark}\label{Rm:X-DissipativityVersions}
If the system \eqref{Eq:GeneralSystem} is $X$-dissipative with:
\begin{enumerate}
\item $X = \scriptsize
\begin{bmatrix}
\0 & \frac{1}{2}\I \\ \frac{1}{2}\I & \0
\end{bmatrix}\normalsize$, 
then it is passive;\\
\item $X = \scriptsize
\begin{bmatrix}
-\nu\I & \frac{1}{2}\I \\ \frac{1}{2}\I & -\rho\I
\end{bmatrix}\normalsize$, 
then it is strictly passive with $\nu$ and $\rho$ respectively being the input and output passivity indices (also denoted as IF-OFP($\nu,\rho$));\\
\item $X = 
\scriptsize\begin{bmatrix}
\gamma^2\I & \0 \\ \0 & -\I
\end{bmatrix}\normalsize$, then it is finite-gain $L_2$-stable with $\gamma$ being the $L_2$ gain (also denoted as $L2G(\gamma)$).
\end{enumerate}
\end{remark}

\subsection{Dissipativity Based Networked System Design}
Consider a networked system $\Sigma: w \rightarrow z$ as shown in Fig. \ref{Fig:NetworkedSystem} comprised of $N$ subsystems $\{\Sigma_i:i\in\N_N\}$ along with a disturbance input $w\in\R^{n_w}$ and a performance output $z\in\R^{n_z}$. Each subsystem $\Sigma_i: u_i\rightarrow y_i,i\in\N_N$ follows the dynamics 
\begin{equation}
\label{Eq:SubsystemDynamics}
\begin{aligned}
\dot{x}_i(t) =&\ f_i(x_i(t),u_i(t)),\\
y_i(t) =&\ g_i(x_i(t),u_i(t)),   
\end{aligned}
\end{equation}
where $x_i(t)\in\R^{n_{xi}}$, $u_i(t)\in\R^{n_{ui}}$ and $y_i(t)\in \R^{n_{yi}}$ denotes the state, input and output, respectively, and $f_i(\0,\0) = \0$ and $h_i(\0,\0) = \0$ (similarly to \eqref{Eq:GeneralSystem}). The subsystems are interconnected according to the relationship
\begin{equation}\label{Eq:InterconnectionMatrix}
\bm{u(t)\\ z(t)} = M \bm{y(t)\\w(t)} \equiv \bm{M_{uy} & M_{uw} \\ M_{zy} & M_{zw}} \bm{y(t)\\w(t)}
\end{equation}
where $u(t) \triangleq [u_i^\T(t)]_{i\in\N_N}^\T \in \R^{n_u}$ (with $n_u \triangleq \sum_{i\in\N_N} n_{ui}$), and $y(t) \triangleq [y_i^\T(t)]_{i\in\N_N}^\T \in \R^{n_y}$ (with $n_y \triangleq \sum_{i\in\N_N} n_{yi}$) are vectorized subsystem inputs and outputs, respectively, and $M \in \R^{(n_u + n_z)\times (n_y + n_w)}$ is the interconnection matrix.

\begin{figure}
    \centering
    \includegraphics[width=0.65\linewidth]{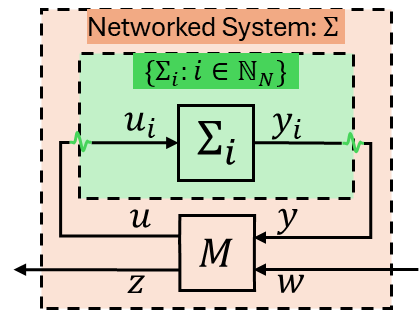}
    \caption{Networked System}
    \label{Fig:NetworkedSystem}
\end{figure}

In the context of this paper, we require the following standard simplifying assumption regarding the dissipativity properties of the subsystems (also used and justified in \cite{WelikalaP52022}). 

\begin{assumption}\label{As:SubsystemDissipativity}
Each subsystem $\Sigma_i,\ i\in\N_N$ \eqref{Eq:SubsystemDynamics} is $X_i$-dissipative (from input $u_i(t)$ to output $y_i(t)$) where $X_i = X_i^\T = [X_i^{kl}]_{k,l\in\N_2}$ is such that $X_i^{11} > 0$. 
\end{assumption}

The following proposition from \cite{WelikalaP52022} exploits subsystem dissipativity properties $\{X_i:i\in\N_N\}$ assumed in As. \ref{As:SubsystemDissipativity} to design the interconnection matrix $M$ so that the networked system is $\X$-dissipative, where $\mathcal{X} = \mathcal{X}^\T \triangleq  [\mathcal{X}^{kl}]_{k,l\in\N_2} \in \R^{(n_w + n_z)\times(n_w + n_z)}$ with $\mathcal{X}^{22}<0$.

\begin{proposition}\cite{WelikalaP52022}\label{Prop:NetworkedSystem}
Under As. \ref{As:SubsystemDissipativity}, the networked system $\Sigma$ shown in Fig. \ref{Fig:NetworkedSystem} can be made $\mathcal{X}$-dissipative (from input $w(t)$ to output $z(t)$, where $\mathcal{X} = \mathcal{X}^\T \triangleq [\mathcal{X}^{kl}]_{k,l\in\N_2}$ with $\mathcal{X}^{22}<0$), by designing the interconnection matrix $M$ via the LMI problem
\begin{equation}\label{Eq:Prop:NetworkedSystem}
    \begin{aligned}
    \mbox{Find: }& L_{uy}, L_{uw}, M_{zy}, M_{zw}, \{p_i: i\in\mathbb{N}_N\}\\
    \mbox{Sub. to: }& p_i > 0,\ \forall i\in\mathbb{N}_N,\ \Psi > 0
    \end{aligned}
\end{equation}
where $\Psi$ is as defined in \eqref{Eq:Prop:NetworkedSystem2},  $\textbf{X}_p^{kl} \triangleq diag([p_iX_i^{kl}]_{i\in\N_N})$ for any $k,l\in\N_2$,  
$\textbf{X}^{12}\triangleq diag([(X_i^{11})^{-1}X_i^{12}]_{i\in\mathbb{N}_N})$, $\textbf{X}^{21}\triangleq (\textbf{X}^{12})^\top$ with 
$M_{uy}\triangleq (\textbf{X}_p^{11})^{-1} L_{uy}$ and $M_{uw} \triangleq   (\textbf{X}_p^{11})^{-1} L_{uw}$.  
\end{proposition}

\begin{figure*}[!hb]
\hrulefill
\centering 
\begin{equation}
\label{Eq:Prop:NetworkedSystem2}
\Psi \triangleq 
\begin{bmatrix} 
\textbf{X}_p^{11} & \textbf{0} & L_{uy} & L_{uw} \\
\star & -\mathcal{X}^{22} & -\mathcal{X}^{22}M_{zy} & -\mathcal{X}^{22} M_{zw}\\ 
\star & \star & -\mathcal{H}(\textbf{X}^{21}L_{uy})-\textbf{X}_p^{22} & -\textbf{X}^{21}L_{uw}+M_{zy}^\top\mathcal{X}^{21} \\
\star & \star & \star &  \mathcal{H} (\mathcal{X}^{12}M_{zw}) + \mathcal{X}^{11}
\end{bmatrix}
\end{equation}
\end{figure*}

If the interconnection matrix $M$ is constrained so that 
\begin{equation}\label{Eq:InterconnectionMatrix2}
M_{uw} \triangleq \I, \quad 
M_{zy} \triangleq \I, \quad 
M_{zw} \triangleq \0,
\end{equation}
(i.e., when $u = M_{uy}y + w$ and $z = y$ in \eqref{Eq:InterconnectionMatrix}), the following corollary can be used to design the remaining interconnection matrix block $M_{uy}$ so that the networked system is $\X$-dissipative.

\begin{corollary}\label{Co:NetworkedSystem}
Under As. \ref{As:SubsystemDissipativity} and constraints \eqref{Eq:InterconnectionMatrix2}, the networked system $\Sigma$ can be made $\mathcal{X}$-dissipative (from input $w(t)$ to output $z(t)$, where $\mathcal{X} = \mathcal{X}^\T =  [\mathcal{X}^{kl}]_{k,l\in\N_2}$ with $\mathcal{X}^{22}<0$), by designing the  interconnection matrix $M$ via the LMI problem
\begin{equation}\label{Eq:Co:NetworkedSystem}
    \begin{aligned}
    \mbox{Find: }& L_{uy}, \{p_i: i\in\mathbb{N}_N\}, \X\\
    \mbox{Sub. to: }& p_i > 0,\ \forall i\in\mathbb{N}_N,\ \Phi > 0
    \end{aligned}
\end{equation}
with $M_{uy}\triangleq (\textbf{X}_p^{11})^{-1} L_{uy}$, where $\Phi$ takes the form
\begin{equation}
\label{Eq:Co:NetworkedSystem2}
\Phi \triangleq 
\begin{bmatrix} 
\textbf{X}_p^{11} & \textbf{0} & L_{uy} & \textbf{X}_p^{11} \\
\star & -\mathcal{X}^{22} & -\mathcal{X}^{22} & \0\\ 
\star & \star & -\mathcal{H}(\textbf{X}^{21}L_{uy})-\textbf{X}_p^{22} & -\textbf{X}_p^{21}+\mathcal{X}^{21} \\
\star & \star & \star &  \mathcal{X}^{11}
\end{bmatrix}.
\end{equation}
\end{corollary}
\begin{proof}
As $M_{uw}\triangleq (\textbf{X}_p^{11})^{-1} L_{uw}$ and $M_{uw} \triangleq \I$, we get $L_{uw} = \textbf{X}_p^{11}$. Note also that $\textbf{X}^{12} \triangleq (\textbf{X}_p^{11})^{-1}\textbf{X}_p^{12}$, and thus, $\textbf{X}_p^{11}\textbf{X}^{12}=\textbf{X}_p^{12}$. Similarly,   $\textbf{X}^{21}\textbf{X}_p^{11}=\textbf{X}_p^{21}$. Applying these results together with \eqref{Eq:InterconnectionMatrix2} in \eqref{Eq:Prop:NetworkedSystem}, we can obtain \eqref{Eq:Co:NetworkedSystem}.
\end{proof}

\begin{remark}\label{Rm:Analysis}
For $\X$-dissipativity analysis of the networked system using Prop. \ref{Prop:NetworkedSystem}, one can follow a similar process to the Proof of Co. \ref{Co:NetworkedSystem}. It simply involves setting $M$ to its nominal value and then executing some minor change of variables, before evaluating the feasibility of the transformed version of the LMI problem  \eqref{Eq:Prop:NetworkedSystem}.  
\end{remark}

\begin{remark}
In Co. \ref{Co:NetworkedSystem}, the parameter $\X$ in the networked system $\X$-dissipativity is considered as a decision variable since its components (i.e., $\{\X^{kl}:k,l\in\N_2\}$) appear linearly in \eqref{Eq:Co:NetworkedSystem}. This introduces an extra degree of freedom enabling the optimization of the enforced (or analyzed) $\X$-dissipativity properties. This fact is particularly useful when the interested $\X$ dissipativity property represents input-output properties like IF-OFP($\nu,\rho$) or L2G($\gamma$) outlined in Rm. \ref{Rm:X-DissipativityVersions}.        
\end{remark}

\section{Networked System Design Considerations}
\label{Sec:Preliminaries2}

In this section, we extend the preliminary results presented in Section \ref{Sec:Preliminaries}, particularly the $\X$-dissipative networked system design techniques given in Co. \ref{Co:NetworkedSystem}, to address several design considerations that will be useful in the sequel. 

\subsection{Necessary Conditions to be Enforced at the Subsystems}

Note that the feasibility of the network-level (i.e., global) LMI problem given in Co. \ref{Co:NetworkedSystem} depends on the assumed subsystem-level (i.e., local) dissipativity properties $\{X_i:i\in\N_N\}$. Often, such local dissipativity properties result from a separate set of corresponding local LMI problems. Therefore, it is worthwhile to identify any necessary conditions that exist for local dissipativity properties to support the feasibility of the global LMI problem, as they can then be integrated into such local LMI problems. In the following proposition, we identify a set of such necessary conditions.

\begin{proposition}\label{Prop:NecessaryConditions}
For the feasibility of the global LMI problem \eqref{Eq:Co:NetworkedSystem} given in Co. \ref{Co:NetworkedSystem}, it is necessary that each subsystem $\Sigma_i, i\in\N_N$ to be $X_i$-dissipative such that $X_i$ satisfies the LMI problem:
\begin{equation}\label{Eq:Prop:NecessaryConditions}
\begin{aligned}
\mbox{Find: }&\ X_i, \{\bar{\mathcal{X}}_{ii}^{kl}:k,l\in\N_2\}, \\
\mbox{Sub. to: }&\ \tilde{\Phi}_{ii} > 0
\end{aligned}
\end{equation}
where $\tilde{\Phi}_{ii}$ takes the form 
\begin{equation}
\label{Eq:Prop:NecessaryConditions2}
\begin{aligned}
&\tilde{\Phi}_{ii} \triangleq \\
&\begin{bmatrix} 
X_i^{11} & \textbf{0} & X_i^{11} M_{uy}^{ii} & X_i^{11} \\
\star & -\bar{\mathcal{X}}^{22}_{ii} & -\bar{\mathcal{X}}^{22}_{ii} & \0\\ 
\star & \star & -\mathcal{H}(X_i^{21}M_{uy}^{ii}) - X_i^{22} & -X_i^{21}+\bar{\mathcal{X}}^{21}_{ii} \\
\star & \star & \star & \bar{\mathcal{X}}^{11}_{ii}
\end{bmatrix},
\end{aligned}
\end{equation}
each $\bar{\mathcal{X}}_{ii}^{kl}$ is a free variable matrix with the format of the $i$\tsup{th} diagonal block of $\mathcal{X}^{kl}$ in \eqref{Eq:Co:NetworkedSystem}, and $M_{uy}^{ii}$ is the $i$-th diagonal block of $M_{uy}$ (i.e., the self-connection matrix of $\Sigma_i$).
\end{proposition}
\begin{proof}
For the block matrix $\Phi = [\Phi_{kl}]_{k,l\in\N_4}$ given in \eqref{Eq:Co:NetworkedSystem2}, the equivalence 
$$\Phi > 0 \iff \bar{\Phi} \triangleq [[\Phi_{kl}^{ij}]_{k,l\in\N_4}]_{i,j\in\N_N} > 0$$
holds, where $\bar{\Phi}$ is the ``block element-wise'' permutation \cite{WelikalaJ22022} of $\Phi$. Using the diagonal blocks of $\bar{\Phi}$, we can identify a set of necessary conditions for $\Phi > 0$ as 
$$
\Phi > 0 \iff \bar{\Phi} > 0 \implies 
\{ \bar{\Phi}_{ii} > 0, \forall i\in\N_N\},
$$
where each $\bar{\Phi}_{ii} \triangleq [\Phi_{kl}^{ii}]_{k,l\in\N_4}$ takes the form in \eqref{Eq:Prop:NecessaryConditionsStep1}. Note that for any $i\in\N_N$, the implication,   
$$M_{uy} = (\textbf{X}_p^{11})^{-1} L_{uy} \implies M_{uy}^{ii} = p_i^{-1}(X_i^{11})^{-1}L_{uy}^{ii}$$
has also been used to replace $L_{uy}^{ii}$ terms in deriving \eqref{Eq:Prop:NecessaryConditionsStep1}. Finally, we use the equivalence 
$$
\bar{\Phi}_{ii} > 0 \iff \tilde{\Phi}_{ii} \triangleq \frac{1}{p_i}\bar{\Phi}_{ii} > 0, 
$$ 
and the notation $\bar{\mathcal{X}}^{kl}_{ii} \triangleq \frac{1}{p_i}\mathcal{X}^{kl}_{ii}, \forall k,l\in\N_2$
to obtain
$$
\Phi > 0 \implies \{\tilde{\Phi}_{ii} \mbox{ as in \eqref{Eq:Prop:NecessaryConditions2}}: \tilde{\Phi}_{ii} > 0,  i\in\N_N \},
$$
i.e., the necessary conditions for $\Phi > 0$. 
\end{proof}

\begin{figure*}[!hb]
\centering 
\hrulefill
\begin{equation}
\label{Eq:Prop:NecessaryConditionsStep1}
\bar{\Phi}_{ii} \triangleq 
\begin{bmatrix} 
p_iX_i^{11} & \textbf{0} & p_i X_i^{11} M_{uy}^{ii} & p_iX_i^{11} \\
\star &  -\mathcal{X}^{22}_{ii} & -\mathcal{X}^{22}_{ii} & \0\\ 
\star & \star & - \mathcal{H}(p_iX_i^{21}M_{uy}^{ii})-p_iX_i^{22} & -p_iX_i^{21}+\mathcal{X}^{21}_{ii} \\
\star & \star & \star &  \mathcal{X}^{11}_{ii}
\end{bmatrix}
\end{equation}
\end{figure*}

\begin{remark}
To include the identified necessary condition  \eqref{Eq:Prop:NecessaryConditions} in the local LMI problem solved at $\Sigma_i, i\in\N_N$ (to analyze/enforce its $X_i$-dissipativity), it is required to know the self-connection matrix $M_{uy}^{ii}$. As we will see in the sequel, this information is known and fixed at the ``subsystems'' considered in the spreading networks analysis and control application studied in this paper.
\end{remark}

\begin{remark}
A further simplified necessary condition than \eqref{Eq:Prop:NecessaryConditions} can be obtained if the matrices $X_i^{kl}$ and $\bar{\mathcal{X}}_{ii}^{kl}$ (i.e., subsystem and network level dissipativity measures) are limited to be scalar matrices, over all $k,l\in\N_2$ and $i\in\N_N$. 
The idea is to reapply the diagonal block extraction and ``block element-wise'' formulation technique used in Prop. \ref{Prop:NecessaryConditions}, for \eqref{Eq:Prop:NecessaryConditions}. 
Note that this simplification will be further significant if $M_{uy}^{ii}$ is a zero matrix or has a zero in its diagonal.
\end{remark}

\subsection{Enforcing Constraints on the Interconnection Matrix $M$}

When designing the interconnection matrix $M$ for the networked system using Prop. \ref{Prop:NetworkedSystem} or Co. \ref{Co:NetworkedSystem}, instead of designing it from scratch, it is often required to minimize or constrain deviations from a known nominal interconnection matrix $\bar{M}$. 
However, including such an objective or constraint in the global LMI problems given in Prop. \ref{Prop:NetworkedSystem} and Co. \ref{Co:NetworkedSystem} make them bilinear matrix inequality (BMI) problems. This is due to the involved change of variables (see below \eqref{Eq:Prop:NetworkedSystem} and \eqref{Eq:Co:NetworkedSystem}), which, in general, take the form 
$$L = X M \iff M = X^{-1}L,$$
where $X$ and $L$ are the actual design variables that appear linearly in the global LMI problems. However, as detailed below, there exist some approximate strategies that we can adopt to minimize an objective function of the form 
\begin{equation}
  \label{Eq:ObjectiveFunctionForm}
  J(M) \triangleq \Vert M - \bar{M} \Vert,
\end{equation}
and enforce constraints of the form (with $\delta_2,\delta_l,\delta_u \in [0,1]$)
\begin{align}
  \label{Eq:ConstraintForm1}
  \Vert M - \bar{M} \Vert_2  \leq \delta_2 \Vert \bar{M} \Vert_2, \mbox{ or }\\
  \label{Eq:ConstraintForm2}
  - \delta_l \bar{M} \leqew M - \bar{M} \leqew  \delta_u \bar{M},
\end{align}
when designing the interconnection matrix $M$ without compromising the convex LMI format of the overall problem.

\begin{proposition}\label{Pr:AlternativeObjective}
The objective function $J(M)$ in \eqref{Eq:ObjectiveFunctionForm} can be minimized via minimizing the convex objective function 
\begin{equation}\label{Eq:Pr:AlternativeObjective}
\hat{J}(L,X) \triangleq  \Vert L - X\bar{M} \Vert - \lambda \Vert X \Vert,
\end{equation}
where $\lambda > 0$ if $X$ is a scalar matrix or $\lambda = 0$ otherwise. 
\end{proposition}
\begin{proof}
Using the fact $M = X^{-1}L$, from \eqref{Eq:ObjectiveFunctionForm}, we get 
$$
J(M) = \Vert M - \bar{M}\Vert = \Vert X^{-1}L - X^{-1}X \bar{M}\Vert \Vert X \Vert \Vert X\Vert^{-1}.
$$
Using the identity: $\Vert AB \Vert \leq \Vert A \Vert \Vert B \Vert$, for any conformable $A,B$ matrices, and its consequence $\Vert X \Vert^{-1}  \leq \Vert X^{-1} \Vert$, we can identify lower and upper bounds for $J(M)$ as  
$$  
\Vert X \Vert^{-1} \Vert L - X\bar{M} \Vert \leq  J(M) \leq \Vert X^{-1} \Vert \Vert L - X\bar{M} \Vert.
$$
From these bounds, it is clear that $J(M)$ can be minimized via: (i) minimizing $\Vert L - X\bar{M} \Vert$, which is convex in $L$ and $X$, and (ii) maximizing $\Vert X \Vert$, which is linear (convex and concave) in $X$ when $X$ is a scalar matrix. Combining these two facts, we can obtain the alternative convex objective function $\hat{J}(L,X)$ given in \eqref{Eq:Pr:AlternativeObjective}.  
\end{proof}

\begin{proposition}\label{Pr:AlternativeConstraints}
The constraint \eqref{Eq:ConstraintForm1} on $M$ holds if there exists feasible $L$ and $X$ such that
\begin{equation}\label{Eq:Pr:AlternativeConstraints1}
0 \leq 
\bm{
\I & Y^\T & \0\\
Y & XY^\T + YX^\T  & L-X\bar{M} \\ 
\0 & L^\T - \bar{M}^\T X^\T & \delta^2_2 \bar{M}^\T \bar{M}},
\end{equation}
for some $Y$ (e.g., $Y = \I$). The constraint  \eqref{Eq:ConstraintForm1} on $M$ holds if and only if there exists feasible $L$ and $X$ such that 
\begin{equation}\label{Eq:Pr:AlternativeConstraints2}
- \delta_l X\bar{M} \leqew L - X\bar{M} \leqew  \delta_u X\bar{M},
\end{equation}
when $X$ is diagonal and $X > 0$.
\end{proposition}
\begin{proof}
To prove the first part, we require establishing \eqref{Eq:ConstraintForm1} $\impliedby$ \eqref{Eq:Pr:AlternativeConstraints1}. To this end, note that 
\begin{align*}
\eqref{Eq:ConstraintForm1} 
&\ \iff 
0 \leq  \delta^2_2 \bar{M}^\T \bar{M} - (M-\bar{M})^\T (M-\bar{M}) \\
&\ \iff  
0 \leq \bm{\I & M-\bar{M} \\ M^\T-\bar{M}^\T & \delta^2_2 \bar{M}^\T \bar{M}}\\
&\ \iff  
0 \leq \bm{XX^\T & L-X\bar{M} \\ L^\T - \bar{M}^\T X^\T & \delta^2_2 \bar{M}^\T \bar{M}},
\end{align*}
where the three steps respectively follow from a well-known matrix 2-norm characteristic, the Schur complement, and the congruence principle. From the Cauchy-Schwarz inequality, we have  
$
XY^\T + YX^\T - YY^\T \leq XX^\T.
$
Applying this and reusing the Schur complement, we get
\begin{align*}
0 \leq &\bm{XX^\T & L-X\bar{M} \\ L^\T - \bar{M}^\T X^\T & \delta^2_2 \bar{M}^\T \bar{M}}\\
&\ \impliedby
0 \leq \bm{XY^\T + YX^\T - YY^\T & L-X\bar{M} \\ L^\T - \bar{M}^\T X^\T & \delta^2_2 \bar{M}^\T \bar{M}}\\
&\ \iff \eqref{Eq:Pr:AlternativeConstraints1}.
\end{align*}
Combining these two results, we get \eqref{Eq:ConstraintForm1} $\impliedby$ \eqref{Eq:Pr:AlternativeConstraints1}.

To prove the second part, we require establishing \eqref{Eq:ConstraintForm2} $\iff$ \eqref{Eq:Pr:AlternativeConstraints2}. To this end, we use the given facts that $X$ is diagonal and $X>0$ to multiply \eqref{Eq:ConstraintForm2} by $X$, which gives
$$\eqref{Eq:ConstraintForm2} \iff - \delta_l X\bar{M} \leqew XM - X\bar{M} \leqew  \delta_u X\bar{M} \iff \eqref{Eq:Pr:AlternativeConstraints2},$$
where the last step is from applying $L = XM$. 
\end{proof}

\begin{remark}
As we will see in the sequel, the additional conditions imposed on $X$ in Props. \ref{Pr:AlternativeObjective} and \ref{Pr:AlternativeConstraints} typically hold when Prop. \ref{Prop:NetworkedSystem} or Co. \ref{Co:NetworkedSystem} is applied to designing the interconnection matrix $M$ in spreading network control applications.
\end{remark}

\begin{remark}\label{Rm:BMIforY}
The identified alternative constraint \eqref{Eq:Pr:AlternativeConstraints1} is equivalent to the desired constraint \eqref{Eq:ConstraintForm1} if $Y$ can be selected such that $Y=X$. 
However, such a selection is not possible because only $X$ and $L$ are the design variables in \eqref{Eq:Pr:AlternativeConstraints1} (otherwise \eqref{Eq:Pr:AlternativeConstraints1} would be a BMI in $X,L$ and $Y$). Nevertheless, as an alternative, an efficient iterative process can be created where in each iteration, a candidate solution for the design variable $X$ is first determined and then \eqref{Eq:Pr:AlternativeConstraints1} is re-solved to update $Y$ (note that \eqref{Eq:Pr:AlternativeConstraints1} is linear in $Y$ for a fixed $X$), so that $Y\simeq X$ is achieved at convergence. For more details on such BMI solution techniques based on iterative LMI solutions, we refer the readers to \cite{VanAntwerp2000,Simon2011,Lee2019}. 
\end{remark}

\subsection{Scalable Mesh Stability}

Enforcing dissipativity in a networked system ensures the dissipation of disturbances. In contrast, enforcing scalable mesh stability (SMS), as defined below, ensures the dispersion of the disturbances over the network. Hence, enforcing both dissipativity and SMS can significantly improve the robustness and resilience of a networked system. Motivated by this, we identify necessary conditions for SMS that can be directly incorporated into the dissipative interconnection matrix $M$ design method given in Co. \ref{Co:NetworkedSystem}. 

First, let us consider the networked system $\Sigma$ shown in Fig. \ref{Fig:NetworkedSystem} under the constraints \eqref{Eq:InterconnectionMatrix2} (i.e., the networked system considered in Co. \ref{Co:NetworkedSystem}). In this setting, each subsystem $\Sigma_i,i\in\N_N$ \eqref{Eq:SubsystemDynamics} can be described by (with a slight abuse of notation)
\begin{equation}\label{Eq:SubsystemDynamics2}
\dot{x}_i(t) = f_i(x_i(t),u_i(t)) = f_i(x_i(t),\{x_j(t):j\in\mathcal{N}_i\},w_i(t)),
\end{equation}
were $u_i(t) = \sum_{j\in\N_N}M_{uy}^{ij} x_j(t) + w_i(t)$ and $\mathcal{N}_i\triangleq \{j\in\N_N: M_{uy}^{ij} \neq \0, j\neq i\}$ denotes the set of neighbors of each subsystem $\Sigma_i, i\in\N_N$. Next, we formally recall the definitions of input-to-state stability (ISS) and SMS, along with the necessary condition derived for SMS given in \cite{Mirabilio2022}.

\begin{definition}\label{Def:ISS} \cite{Mirabilio2022}
A subsystem $\Sigma_i, i\in\N_N$ \eqref{Eq:SubsystemDynamics2} is ISS if there exists $\beta_i\in\mathcal{KL},\ \gamma_i \in \mathcal{K}_{\infty}$ and $\sigma_i\in\mathcal{K}_{\infty}$ such that  
$$
\vert x_i(t) \vert \leq \beta_i(\vert x_i(t_0)\vert, t-t_0) + \gamma_i(\max_{j\in\mathcal{N}_i} \Vert x_i \Vert_\infty) + \sigma_i(\Vert w_i \Vert_{\infty}).
$$
\end{definition}

\begin{definition}\label{Def:SMS} \cite{Mirabilio2022}
The networked system comprised of the subsystems \eqref{Eq:SubsystemDynamics2} is said to be scalable mesh stable (SMS) if the ISS property of each subsystem implies the existence of some $\beta \in \mathcal{KL}$ and $\sigma\in\mathcal{K}_{\infty}$ such that
$$
\max_{i\in\N_N} \vert x_i(t) \vert \leq \beta(\max_{i\in\N_N} \vert x_i(t_0)\vert,t-t_0) + \sigma(\max_{i\in\N_N} \Vert w_i\Vert_{\infty}),
$$
for any initial condition $x_i(t_0)$ and disturbance $w_i, i\in\N_N$. 
\end{definition}

\begin{proposition} \cite{Mirabilio2022}
\label{Pr:SMS}
The networked system comprised of the subsystems \eqref{Eq:SubsystemDynamics2} is SMS (see Def. \ref{Def:SMS}) if each subsystem is ISS such that there exists $\tilde{\gamma} \in (0,1)$ that satisfies  
$$
\gamma_i(s) \leq \tilde{\gamma}s,\quad \forall s \geq 0, i\in\N_N. 
$$
\end{proposition}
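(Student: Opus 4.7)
The plan is to adapt the classical small-gain argument for interconnected ISS systems, exploiting the linearity of the bound $\gamma_i(s) \leq \tilde{\gamma} s$ to decouple the evolution of the network-wide max-norm. First I would specialize the ISS estimate in Def.~\ref{Def:ISS} for each subsystem by substituting $\gamma_i(s) \leq \tilde{\gamma} s$ to get
$$ |x_i(t)| \leq \beta_i(|x_i(t_0)|, t-t_0) + \tilde{\gamma} \max_{j \in \mathcal{N}_i} \|x_j\|_{\infty} + \sigma_i(\|w_i\|_{\infty}).$$
Then I would introduce the aggregate quantities $V(t) \triangleq \max_{i\in\N_N} |x_i(t)|$, $\bar{V}_{[t_0,t]} \triangleq \sup_{s \in [t_0,t]} V(s)$, $V_0 \triangleq V(t_0)$, $W \triangleq \max_{i\in\N_N} \|w_i\|_{\infty}$, together with envelope functions $\tilde{\beta}(r,s) \triangleq \max_{i\in\N_N} \beta_i(r,s) \in \mathcal{KL}$ and $\tilde{\sigma}(r) \triangleq \max_{i\in\N_N} \sigma_i(r) \in \mathcal{K}_{\infty}$. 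Since $\max_{j\in\mathcal{N}_i} \|x_j\|_{\infty} \leq \bar{V}_{[t_0,t]}$, taking the maximum over $i$ produces the network-level inequality
$$ V(t) \leq \tilde{\beta}(V_0, t-t_0) + \tilde{\gamma}\, \bar{V}_{[t_0,t]} + \tilde{\sigma}(W).$$

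Second, to establish uniform boundedness, I would take the supremum of both sides over $t \in [t_0,T]$, using the monotonicity of $\tilde{\beta}$ in its second argument, to get
$$ \bar{V}_{[t_0,T]} \leq \tilde{\beta}(V_0, 0) + \tilde{\gamma}\, \bar{V}_{[t_0,T]} + \tilde{\sigma}(W),$$
which, since $\tilde{\gamma} \in (0,1)$, rearranges to $\bar{V}_{[t_0,T]} \leq \tfrac{1}{1-\tilde{\gamma}}\bigl(\tilde{\beta}(V_0,0) + \tilde{\sigma}(W)\bigr)$ uniformly in $T$. This already delivers the disturbance component of the SMS bound in Def.~\ref{Def:SMS}, suggesting the choice $\sigma(r) \triangleq \tfrac{1}{1-\tilde{\gamma}}\tilde{\sigma}(r) \in \mathcal{K}_\infty$.

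Third, to recover the class-$\mathcal{KL}$ portion of the SMS bound, I would iterate the estimate over a sequence of windows $[T_k, T_{k+1}]$ with $T_0 = t_0$, choosing each window length so that $\tilde{\beta}(V(T_k), T_{k+1}-T_k)$ is at most a fixed fraction of $V(T_k)$ (possible because $\tilde{\beta}(r,\cdot) \to 0$). On each window the previous step bounds $\bar{V}_{[T_k,T_{k+1}]}$ in terms of $V(T_k)$ and $W$, and the small-gain contraction $\tilde{\gamma} < 1$ forces the initial-condition contribution to shrink geometrically from window to window. Aggregating these per-window estimates via the standard construction of a $\mathcal{KL}$ function from a geometrically decaying envelope (see e.g.~\cite{Sontag1995}) then yields a single $\beta \in \mathcal{KL}$ satisfying the requirement of Def.~\ref{Def:SMS}.

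The main obstacle, as in all interconnected small-gain arguments, is this third step: converting the algebraic contraction $\tilde{\gamma} < 1$ into an explicit $\mathcal{KL}$ decay estimate without letting the persistent term $\tilde{\sigma}(W)$ contaminate the transient bound. This requires carefully separating the vanishing initial-condition contribution from the steady-state disturbance contribution on each window and appealing to the standard composition lemmas for class-$\mathcal{K}$ and $\mathcal{KL}$ functions, exactly as carried out in~\cite{Mirabilio2022}.
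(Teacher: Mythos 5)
First, note that the paper does not prove Proposition \ref{Pr:SMS} at all: it is imported verbatim from \cite{Mirabilio2022} and used as a black box, so there is no in-paper argument to compare yours against; I can only assess your sketch on its merits. Your first two steps are the standard max-type small-gain argument and are essentially sound, modulo two technical points you gloss over: (i) in Def.~\ref{Def:ISS} the cross term is $\gamma_i(\max_{j\in\mathcal{N}_i}\Vert x_j\Vert_\infty)$ with $\Vert\cdot\Vert_\infty$ the supremum over \emph{all} time, so replacing it by $\bar{V}_{[t_0,t]}$ requires a causality argument (or you must carry the global supremum throughout); and (ii) the rearrangement $\bar{V}_{[t_0,T]} \leq \tilde{\beta}(V_0,0) + \tilde{\gamma}\,\bar{V}_{[t_0,T]} + \tilde{\sigma}(W) \Rightarrow \bar{V}_{[t_0,T]} \leq \tfrac{1}{1-\tilde{\gamma}}(\tilde{\beta}(V_0,0)+\tilde{\sigma}(W))$ is vacuous unless $\bar{V}_{[t_0,T]}<\infty$ is already known, which needs a boundedness/continuation argument on the maximal interval of existence.

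The genuine gap is your third step as written. If you restart the estimate at $T_k$ and choose $\Delta_k = T_{k+1}-T_k$ so that $\tilde{\beta}(V(T_k),\Delta_k)$ is a fixed fraction of $V(T_k)$, the resulting bound is $V(T_{k+1}) \leq \tilde{\beta}(V(T_k),\Delta_k) + \tfrac{\tilde{\gamma}}{1-\tilde{\gamma}}\tilde{\beta}(V(T_k),0) + c\,\tilde{\sigma}(W)$, and the middle term does not shrink with the window length (typically $\tilde{\beta}(r,0)\geq r$, and $\tilde{\gamma}$ may be close to $1$), so no geometric contraction follows from the choice you describe. The repair is to iterate on the tail supremum $v_k \triangleq \sup_{s\geq T_k} V(s)$, for which one does get $v_{k+1} \leq (\tilde{\gamma}+\epsilon)v_k + \tilde{\sigma}(W)$ with $\epsilon < 1-\tilde{\gamma}$; but even then the window lengths $\Delta_k$ depend on $v_k$ (hence on $W$ and may blow up as $v_k\to 0$), so converting the discrete decay into a single $\beta\in\mathcal{KL}$ depending only on $(V_0, t-t_0)$, uniformly in the disturbance, requires the standard but non-trivial machinery (max-form small-gain bookkeeping or the uniform-global-stability-plus-asymptotic-gain characterization of ISS). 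Your sketch defers exactly this hardest step to \cite{Mirabilio2022}, which is circular as a standalone proof of the proposition, even if it mirrors what the paper itself does by citing that reference.
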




Using Prop. \ref{Pr:SMS}, we next derive the additional conditions that should be included in the dissipative networked system design problem given in Co. \ref{Co:NetworkedSystem}. However, we first require the following slightly stronger assumption (than As. \ref{As:SubsystemDissipativity}) regarding the nature of the subsystem dissipativity.

\begin{assumption}\label{As:SubsystemDissipativity2}
Each subsystem $\Sigma_i, i\in\N_N$ \eqref{Eq:SubsystemDynamics2} is $X_i$-dissipative (from input $u_i(t)$ to output $x_i(t)$) where $X_i = X_i^\T = [X_i^{kl}]_{k,l\in\N_2}$ is such that $X_i^{11} > 0$, and the corresponding storage function is $V_i(x_i) \triangleq x_i^\T P_i x_i$ where $P_i > 0$. 
\end{assumption}


\begin{proposition}\label{Pr:NetworkedSystemSMS}
The networked system comprised of the subsystems \eqref{Eq:SubsystemDynamics2}, under As. \ref{As:SubsystemDissipativity2} and constraints \eqref{Eq:InterconnectionMatrix2}, can be made both $\mathcal{X}$-dissipative (from input $w(t)$ to output $z(t)$, where $\mathcal{X} = \mathcal{X}^\T =  [\mathcal{X}^{kl}]_{k,l\in\N_2}$ with $\mathcal{X}^{22}<0$) and SMS, by designing the interconnection matrix $M$ via the LMI problem
\begin{equation}\label{Eq:Pr:NetworkedSystemSMS}
    \begin{aligned}
    \mbox{Find: }& L_{uy}, \{p_i: i\in\mathbb{N}_N\}, \X\\
    \mbox{Sub. to: }& p_i > 0,\ \forall i\in\mathbb{N}_N,\ \Phi > 0, \\
    & \tilde{\lambda}_{i1}
    \sum_{j\in\N_N} \vert (X_i^{11})^{-1} L_{uy}^{ij} \vert < p_i,\ \forall i \in \N_N
    \end{aligned}
\end{equation}
where $\Phi$ is as defined in \eqref{Eq:Co:NetworkedSystem2}, 
$\tilde{\lambda}_{i1} \triangleq \sqrt{\frac{\overline{\lambda}(R_i)\overline{\lambda}(P_i)}{\underline{\lambda}(P_i)\underline{\lambda}(Q_i)}}$ (where $P_i$ is from As. \ref{As:SubsystemDissipativity2}, and $Q_i$ and $R_i$ satisfy $(X_i^{22}+X_i^{21}) \leq -Q_i < 0$ and $(X_i^{11} + X_i^{12}) \leq R_i$), and 
$M_{uy}\triangleq (\textbf{X}_p^{11})^{-1} L_{uy}$
\end{proposition}
\begin{proof}
Compared to Co. \ref{Co:NetworkedSystem}, here we only have to establish the SMS property for the networked system.  

According to As. \ref{As:SubsystemDissipativity2}, as each subsystem $\Sigma_i, i\in\N_N$ is $X_i$-dissipative, we have 
\begin{align}
\dot{V}_i(x_i) 
&\leq x_i^\T X_i^{22} x_i + x_i^\T X_i^{21} u_i + x_i^\T X_i^{21} u_i  + u_i^\T X_i^{11} u_i \nonumber \\
&\leq x_i (X_i^{22}+X_i^{21}) x_i + u_i^\T (X_i^{11} + X_i^{12}) u_i \nonumber \\
&\leq - x_i Q_i x_i + u_i^\T R_i u_i \nonumber \\
&\leq - \underline{\lambda}(Q_i) \vert x_i \vert^2 + \overline{\lambda}(R_i) \vert u_i \vert^2, \label{Eq:Pr:NetworkedSystemSMSStep1}
\end{align}
where we have selected $Q_i$ and $R_i$ so that
\begin{equation}\label{Eq:Pr:NetworkedSystemSMSStep2}
  (X_i^{22}+X_i^{21}) \leq -Q_i < 0\  \mbox{ and }\ 
  (X_i^{11} + X_i^{12}) \leq R_i.
\end{equation}
Moreover, as each subsystem $\Sigma_i, i\in\N_N$ has a quadratic storage function of the form $V_i(x_i) = x_i^\T P_i x_i$, we have
\begin{equation}\label{Eq:Pr:NetworkedSystemSMSStep3}
\underline{\lambda}(P_i) \vert x_i \vert^2 
\leq V_i(x_i) \leq 
\overline{\lambda}(P_i) \vert x_i \vert^2. 
\end{equation}
Applying \eqref{Eq:Pr:NetworkedSystemSMSStep3} in \eqref{Eq:Pr:NetworkedSystemSMSStep1} leads to 
\begin{align}
\dot{V}_i(x_i) 
&\leq - \frac{\underline{\lambda}(Q_i)}{\overline{\lambda}(P_i)} V_i(x_i) + 
\overline{\lambda}(R_i) \vert u_i \vert^2 \nonumber \\
&= -\mu_i V_i(x_i) + \theta_i, 
\label{Eq:Pr:NetworkedSystemSMSStep4}
\end{align}
where we define 
$\mu_i \triangleq \frac{\underline{\lambda}(Q_i)}{\overline{\lambda}(P_i)}$ and $\theta_i \triangleq \overline{\lambda}(R_i) \vert u_i \vert^2$. Note that \eqref{Eq:Pr:NetworkedSystemSMSStep4} implies that
\begin{align} 
V_i(x_i) 
&\leq (1-e^{-\mu_i t}) \frac{\theta_i}{\mu_i} + V_i(x_i(0))e^{-\mu_i t} \nonumber \\
&\leq \frac{\theta_i}{\mu_i} 
+ \overline{\lambda}(P_i) \vert x_i(0) \vert^2 e^{-\mu_i t}, \label{Eq:Pr:NetworkedSystemSMSStep5}
\end{align}
where we have further used the fact $(1-e^{-\mu_i t}) \leq 1$ and \eqref{Eq:Pr:NetworkedSystemSMSStep3}. Using \eqref{Eq:Pr:NetworkedSystemSMSStep3} and the definitions of $\theta_i$ and $\mu_i$, \eqref{Eq:Pr:NetworkedSystemSMSStep5} implies
\begin{equation}\nonumber
\vert x_i(t) \vert^2 \leq \frac{\overline{\lambda}(R_i) \overline{\lambda}(P_i)}{\underline{\lambda}(P_i) \underline{\lambda}(Q_i)}\vert u_i \vert^2 
+ \frac{\overline{\lambda}(P_i)}{\underline{\lambda}(P_i)} \vert x_i(0) \vert^2 e^{-\mu_i t},
\end{equation}
which, using the fact that $\sqrt{a+b} \leq \sqrt{a} + \sqrt{b}$, leads to
\begin{equation}\label{Eq:Pr:NetworkedSystemSMSStep6}
\vert x_i(t) \vert \leq \sqrt{\frac{\overline{\lambda}(R_i)\overline{\lambda}(P_i)}{\underline{\lambda}(P_i)\underline{\lambda}(Q_i)}} \vert u_i \vert
+ \sqrt{\frac{\overline{\lambda}(P_i)}{\underline{\lambda}(P_i)}} \vert x_i(0) \vert e^{-\frac{\mu_i}{2} t}.
\end{equation}

We will next bound the term $\vert u_i \vert$. Using the relationship $u_i = \sum_{j\in\N_N}M_{uy}^{ij} x_j + w_i$, we get 
\begin{align}
\vert u_i \vert 
&\leq \sum_{j\in\N_N} \vert M_{uy}^{ij} x_j \vert + \vert w_i \vert, \nonumber \\
&\leq \sum_{j\in\N_N} \vert M_{uy}^{ij} \vert \vert x_j \vert + \vert w_i \vert \nonumber \\
&\leq \sum_{j\in\N_N} \vert M_{uy}^{ij} \vert \max_{j\in\N_N} \Vert x_j \Vert_{\infty} + \Vert w_i \Vert_{\infty}.\nonumber
\end{align}
Applying this result in \eqref{Eq:Pr:NetworkedSystemSMSStep6}, we obtain 
\begin{equation}\label{Eq:Pr:NetworkedSystemSMSStep7}
\begin{aligned}
\vert x_i(t) \vert \leq& 
\tilde{\lambda}_{i1}
\sum_{j\in\N_N} \vert M_{uy}^{ij} \vert \max_{j\in\N_N} \Vert x_j \Vert_{\infty} + 
\tilde{\lambda}_{i1} \Vert w_i \Vert_{\infty}\\
& + \tilde{\lambda}_{i2} \vert x_i(0) \vert e^{-\frac{\mu_i}{2} t},
\end{aligned}
\end{equation}
where 
$$
\tilde{\lambda}_{i1} \triangleq \sqrt{\frac{\overline{\lambda}(R_i)\overline{\lambda}(P_i)}{\underline{\lambda}(P_i)\underline{\lambda}(Q_i)}}  
\mbox{ and }
\tilde{\lambda}_{i2} \triangleq \sqrt{\frac{\overline{\lambda}(P_i)}{\underline{\lambda}(P_i)}}.
$$

Based on Prop. \ref{Pr:SMS} and Def. \ref{Def:ISS}, the networked system to be SMS, we require the existence of functions $\beta_i\in\mathcal{KL}$ and $\sigma_i\in\mathcal{K}_{\infty}$ such that  
$$
\vert x_i(t) \vert \leq \beta_i(\vert x_i(t_0)\vert, t-t_0) + \tilde{\gamma} \max_{j\in\N_N} \Vert x_i \Vert_\infty + \sigma_i(\Vert w_i \Vert_{\infty}),
$$
for all $i\in\N_N$ where $\tilde{\gamma}$ is a scalar $\tilde{\gamma} \in (0,1)$. Comparing this with \eqref{Eq:Pr:NetworkedSystemSMSStep7} we can obtain the condition for the SMS of the networked system as  
\begin{equation*}
  \tilde{\lambda}_{i1}
  \sum_{j\in\N_N} \vert M_{uy}^{ij} \vert < 1, \quad \forall i\in\N_N.
\end{equation*}
Note that, under the change of variables used in Co. \ref{Co:NetworkedSystem},
$M_{uy}\triangleq (\textbf{X}_p^{11})^{-1} L_{uy} \iff M_{uy}^{ij} = (p_iX_i^{11})^{-1} L_{uy}^{ij}, \ \forall i,j\in\N_N.$
Applying this in the previously identified condition for the SMS of the networked system and multiplying both sides by $p_i$, we can obtain the last constraint in \eqref{Eq:Pr:NetworkedSystemSMS} - which is the new constraint included for SMS, compared to Co. \ref{Co:NetworkedSystem}.
\end{proof}

\subsection{Dissipativity of a Generic Node}

As we will see in Sec. \ref{Sec:Formulation}, a spreading network is consists of an interconnected set of nodes. Before moving on to the next sections where the spreading network analysis/control problem formulation and its solution will be presented (in Sections \ref{Sec:Formulation} and \ref{Sec:Solution}, respectively), here we require establishing a final result regarding nodes. In particular, we consider a generic node and analyze its dissipativity properties under some bounded uncertainties - to obtain a result that will be useful in the subsequent Sec. \ref{Sec:Solution}.

Consider a scalar dynamic system $\Sigma_s:u(t) \rightarrow y(t)$ described by (for known and fixed $\bar{\gamma}, \delta > 0$):
\begin{equation}\label{Eq:ScalarSystem}
\begin{aligned}
    \dot{x}(t) =&\ -\gamma(t) x(t) + (1-x(t))u(t)\\
    y(t) =&\ x(t)
\end{aligned}
\end{equation}
where $\vert \gamma(t)-\bar{\gamma}\vert \leq \delta$, $x(t)\in[0,1]$ and $u(t)\in\R_{\geq0}$. 

As we will see in Sec. \ref{Sec:Formulation}, the scalar dynamic system $\Sigma_s$ \eqref{Eq:ScalarSystem} follows identical dynamics to that of a generic node in a spreading network considered in this paper. The following proposition analyzes the $X$-dissipativity of $\Sigma_s$ \eqref{Eq:ScalarSystem}.


\begin{proposition}\label{Prop:ScalarSystem}
The scalar dynamic system $\Sigma_s$ \eqref{Eq:ScalarSystem}  is $X$-dissipative with any $X=X^\T \triangleq \scriptsize \bm{a & b\\ b & c}\in\R^{2\times 2}$ that satisfies the LMI problem:
\begin{equation}
\label{Eq:Prop:ScalarSystem}
\begin{aligned}
&\mbox{Find: }\ p,\ a,\ b,\ c,\  \\ &\mbox{Sub. to: }\ p > 0,\\
&\begin{cases}
a \geq 0,\ (2b-p)\geq 0,\ (c + p(\bar{\gamma}-\delta)) \geq 0,\ \mbox{or}\\
a > 0,\ b\leq 0,\ \bm{a & b\\ b & c + p(\bar{\gamma}-\delta)} \geq 0.
\end{cases}
\end{aligned}
\end{equation}
\end{proposition}
\begin{proof}
Note that $x=0$ is an equilibrium point of $\Sigma_s$ (with $u=0$ and $y=0$). Therefore, as required in Definitions \ref{Def:Dissipativity} and \ref{Def:X-Dissipativity}, consider a quadratic storage function $V(x)$ and a quadratic supply rate function $S(u,y)$ defined as  
$$ 
V(x) \triangleq \frac{1}{2} p x^2 \quad \mbox{ and } \quad
S(u,y) \triangleq \bm{u \\ y}^\T  \bm{a & b\\ b & c} \bm{u \\ y},
$$
respectively, where $p, a, b, c \in \R$ are design variables and $p > 0$. And for $X$-dissipativity of $\Sigma_s$, we require
$$\dot{V}(x) \leq  s(u,y) \iff 0 \leq \Phi(u,x) \triangleq s(u,x) - \dot{V}(x)$$ 
to hold for all possible trajectories of the system. Using the fact $\dot{V}(x) = px\dot{x}$ with \eqref{Eq:ScalarSystem}, we can simplify $\Phi(u,x)$ as 
\begin{align*}
\Phi(u,x)
=&\ au^2 + ((2b-p) + px)xu + (c + p\gamma)x^2.    
\end{align*}
Since $\gamma \geq \bar{\gamma}-\delta$, we get $\Phi(u,x) \geq \bar{\Phi}(u,x)$ with
$$
\bar{\Phi}(u,x) \triangleq au^2 + ((2b-p) + px)xu + (c + p(\bar{\gamma}-\delta))x^2.
$$
Therefore, for $X$-dissipativity of $\Sigma_s$, we require  $$\bar{\Phi}(u,x) \geq 0, \quad \forall x\in [0,1], u \geq 0.$$
As $\bar{\Phi}(0,0) = 0$, $\bar{\Phi}(0,x) = (c + p(\bar{\gamma}-\delta))x^2$ and $\bar{\Phi}(u,0) = au^2$, it is clear that we require $a \geq 0$ and $(c + p(\bar{\gamma}-\delta)) \geq 0$. Now, let us define 
$$\Psi_x(u) \triangleq \bar{\Phi}(u,x) \equiv au^2 + \bar{b}u + \bar{c},$$ 
where $\bar{b} \triangleq ((2b-p) + px)x$ and $\bar{c}\triangleq (c + p(\bar{\gamma}-\delta))x^2$. Note that, in $\Psi_x(u)$, the coefficients $a \geq 0$ and $\bar{c}\geq 0$.

First, let us consider the case $a > 0$, for which $\Psi_x(u)$ is a convex quadratic function in $u$ for a given $x \in [0,1]$. Let $u^*$ be the unconstrained minimizer of $\Psi_x(u)$ for a given $x\in[0,1]$. Using the convexity and $\frac{\partial \Psi_x(u)}{\partial u} = 0$, we get 
$$u^* = -\frac{\bar{b}}{2a}\quad \mbox{ and }\quad 
\Psi_x(u^*) = -\frac{\bar{b}^2}{4a} + \bar{c}.$$
Note that, if $u^* \leq 0$, the convexity of $\Psi_x(u)$ and $\Psi_x(0) \geq 0$ automatically implies that $\Psi_x(u) \geq 0, \forall u \geq 0$. On the other hand, if $u^* \geq 0$, we then require enforcing $\Psi_x(u^*) \geq 0$ to guarantee $\Psi_x(u) \geq 0, \forall u \geq 0$. Using the $u^*$ and $\Psi_x(u^*)$ expressions, these requirements translate to: If $a>0$, then $\bar{b} \geq 0$, or $\bar{b} \leq 0$ and $4a\bar{c}-\bar{b}^2  \geq 0$.

Second, let us consider the case $a = 0$, for which we have $\Psi_x(u) = \bar{b}u + \bar{c}$ (with $\bar{c}\geq 0$). Therefore, if $a=0$, we require $\bar{b} \geq 0$ to ensure $\Psi_x(u) \geq 0, \forall u \geq 0$. Combining these two cases, for $X$-dissipativity of $\Sigma_s$, we require $\bar{c} \geq 0$ and 
$$
\begin{cases}
a\geq 0, \bar{b} \geq 0, \quad \mbox{ or }\\
a>0,\bar{b}\leq 0,4a\bar{c}-\bar{b}^2\geq 0.
\end{cases}
$$

Next, recall that $\bar{b} = ((2b-p) + px)x = px^2 + (2b-p)x$, where $p>0$ and $x\in [0,1]$. Therefore, $\bar{b} \geq 0 \iff (2b-p) \geq 0$, and  $\bar{b} \leq 0 \iff b \leq 0$. Further, 
\begin{align*}
4a\bar{c} - \bar{b}^2 =&\ 4a(c + p(\bar{\gamma}-\delta))x^2 - ((2b-p) + px)^2x^2 \geq 0\\
&\iff 4a(c + p(\bar{\gamma}-\delta)) - ((2b-p) + px)^2 \geq 0\\
&\iff a(c + p(\bar{\gamma}-\delta)) - b^2 \geq 0.
\end{align*}
Furthermore, 
\begin{align*}
    \bar{c} = (c + p(\bar{\gamma}-\delta))x^2 \geq 0 \iff (c + p(\bar{\gamma}-\delta)) \geq 0.
\end{align*}

Finally, using the above equivalences, the identified requirements for $X$-dissipativity of $\Sigma_s$ can be restated as $(c + p(\bar{\gamma}-\delta)) \geq 0$ and 
\begin{equation*}
\begin{cases}
a \geq 0, (2b-p)\geq 0, \quad \mbox{or}\\
a > 0, b\leq 0, a(c + p(\bar{\gamma}-\delta)) - b^2 \geq 0,
\end{cases}
\end{equation*}
which leads to \eqref{Eq:Prop:ScalarSystem} using the Schur complement.
\end{proof}

Based on Prop. \ref{Prop:ScalarSystem} and Rm. \ref{Rm:X-DissipativityVersions}, the following corollary can be established regarding the passivity of the system $\Sigma_s$ \eqref{Eq:ScalarSystem}.

\begin{corollary}
The system $\Sigma_s$ \eqref{Eq:ScalarSystem} with $\bar{\gamma}-\delta \geq 0$ is passive, and also IF-OFP($\nu,\rho$) with $\nu \leq 0, \rho \leq p(\bar{\gamma}-\delta)$ and $p\leq 1$.
\end{corollary}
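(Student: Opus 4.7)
The plan is to read off both assertions directly from Proposition \ref{Prop:ScalarSystem} by substituting into the LMI \eqref{Eq:Prop:ScalarSystem} the specific coefficient matrices $X$ that encode passivity and IF-OFP according to Remark \ref{Rm:X-DissipativityVersions}. A preliminary observation simplifies everything: in both of those prescribed matrices the off-diagonal entry is $b = \tfrac{1}{2} > 0$, so the second branch of \eqref{Eq:Prop:ScalarSystem} (which demands $b \leq 0$) cannot apply, and the whole argument reduces to verifying the three linear inequalities in the first branch together with $p > 0$.

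For the passivity claim I substitute $(a,b,c) = (0,\tfrac{1}{2},0)$. The first branch of \eqref{Eq:Prop:ScalarSystem} then requires $0 \geq 0$ (trivial), $2\cdot\tfrac{1}{2} - p \geq 0$ (i.e., $p \leq 1$), and $0 + p(\bar{\gamma}-\delta) \geq 0$. Since $p>0$ and $\bar{\gamma}-\delta \geq 0$ by hypothesis, the last condition is automatic, and any $p \in (0,1]$ certifies passivity. For the IF-OFP($\nu,\rho$) claim I substitute $(a,b,c) = (-\nu,\tfrac{1}{2},-\rho)$. The first-branch conditions become $-\nu \geq 0$, $1-p \geq 0$, and $-\rho + p(\bar{\gamma}-\delta) \geq 0$, which are exactly $\nu \leq 0$, $p \leq 1$, and $\rho \leq p(\bar{\gamma}-\delta)$, matching the statement of the corollary.

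The only non-mechanical step is identifying that we live in the first branch of \eqref{Eq:Prop:ScalarSystem} and therefore do not need any Schur-complement manipulation; the role of the new hypothesis $\bar{\gamma}-\delta \geq 0$ is precisely to make the $c$-diagonal inequality compatible with the non-positive $c$ entries produced by passivity-type supply rates. I do not foresee a genuine obstacle: both cases are one-line substitutions into a result already proven.
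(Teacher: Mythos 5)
Your proposal is correct and is essentially the paper's own argument: the paper also proves the corollary by applying Proposition \ref{Prop:ScalarSystem} to the first two coefficient matrices of Remark \ref{Rm:X-DissipativityVersions}, and your substitutions $(a,b,c)=(0,\tfrac{1}{2},0)$ and $(-\nu,\tfrac{1}{2},-\rho)$ into the first branch of \eqref{Eq:Prop:ScalarSystem} simply make that one-line application explicit.
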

\begin{proof}
The proof follows from applying Prop. \ref{Prop:ScalarSystem} for the first two $X$-dissipativity cases given in Rm. \ref{Rm:X-DissipativityVersions}, i.e., when  
$$
X = \bm{a & b \\ b & c} = \bm{0 & \frac{1}{2} \\ \frac{1}{2} & 0} 
\mbox{ and }
X = \bm{a & b \\ b & c} = \bm{-\nu & \frac{1}{2} \\ \frac{1}{2} & -\rho}. 
$$
\end{proof}







\section{Problem Formulation}
\label{Sec:Formulation}

\subsection{Spreading Network Model}

We consider an epidemic spreading over a network $\Sigma$ consisting of $N \in \N$ groups denoted by $\{\Sigma_i: i\in\N_N\}$. Each group $\Sigma_i, i\in\N_N$ is represented by an open graph $\mathcal{G}_i \equiv (\mathcal{V}_i,\mathcal{E}_i)$, where vertices $\mathcal{V}_i \triangleq \{\Sigma_{i,k}:k\in\N_{N_i}\}$ represent the nodes, edges $\mathcal{E}_i \subseteq \mathcal{V}_i \times \mathcal{V}_i$ represent the epidemic spreading interactions between the nodes within the group $\Sigma_i$, and $N_i$ is the total number of nodes in $\Sigma_i$.

For a group $\Sigma_i, i\in\N_N$, an intra-group transmission matrix $M_{ii} \triangleq [M_{ii,kl}]_{k,l\in\N_{N_i}}\in\R^{N_i\times N_i}_{\geq 0}$ is used to contain all the transmission rate values corresponding to the node interactions within the group. In particular, having a positive transmission rate value $M_{ii,kl}>0$ implies the existence of a directed edge $(\Sigma_{i,l},\Sigma_{i,k}) \in \mathcal{V}_i$, indicating that node $\Sigma_{i,l}$ can infect the node $\Sigma_{i,k}$, where $k,l \in \N_{N_i}$ (allowing self-infections, i.e., $k=l$) and $i\in \N_N$. Clearly, the intra-group transmission matrix $M_{ii}$ fully defines (implies) the epidemic spreading interactions $\mathcal{V}_i$ within the group $\Sigma_i, i\in\N_N$.

Extending this notion of intra-group transmission matrices, we define inter-group transmission matrices as those that contain transmission rate values corresponding to node interactions across different groups. For example, the inter-group transmission matrix $M_{ij} \triangleq [M_{ij,kl}]_{k\in\N_{N_i},l\in\N_{N_j}}\in\R^{N_i\times N_j}_{\geq 0}$, represents the spreading interactions directed from the nodes $\{\Sigma_{j,l}, l\in\N_{N_j}\}$ in the group $\Sigma_{j}$ towards the nodes $\{\Sigma_{i,k}:k\in\N_{N_i}\}$ in the group $\Sigma_i$, where $i,j \in\N_N$ with $i\neq j$. In particular, having a positive transmission rate value $M_{ij,kl}>0$ implies the existence of a global directed edge $(\Sigma_{j,l},\Sigma_{i,k})$, indicating that node $\Sigma_{j,l}$ can infect the node $\Sigma_{i,k}$, where $k \in \N_{N_i}, l\in\N_{N_j}$ and $i,j\in \N_N$ with $i\neq j$.

We use the network Susceptible-Infected-Susceptible (SIS) model (inspired by \cite{SheHale2024}, but with added uncertainties and disturbances) to represent the dynamics of the infected proportion (also called the ``state'') $x_{i,k}(t)\in[0,1]$ in each node $\Sigma_{i,k},\, k\in\N_{N_i}, i\in\N_N$, as 
\begin{equation}
\label{Eq:NodeDynamics}
\dot{x}_{i,k}(t) = -\gamma_{i,k}(t) \, x_{i,k}(t) \,+\, (1-x_{i,k}(t))u_{i,k}(t),
\end{equation}
where $\gamma_{i,k}(t)$ is the total recovery rate and $u_{i,k}(t)$ is the external infection effect (also called the ``input''). In particular, $\gamma_{i,k}(t) \triangleq \bar{\gamma}_{i,k} + \tilde{\gamma}_{i,k}(t)$, where $\bar{\gamma}_{i,k}$ is the mean recovery rate and  $\tilde{\gamma}_{i,k}(t)$ accounts for uncertainties due to modeling errors in the recovery rate such that $\vert \tilde{\gamma}_{i,k}(t) \vert \leq \delta_{i,k}$. Note that $\bar{\gamma}_{i,k}$ and $\delta_{i,k}$ are assumed as known and fixed, and satisfies $\bar{\gamma}_{i,k} > \delta_{i,k}$. On the other hand, $u_{i,k}(t)$ in \eqref{Eq:NodeDynamics} is defined as 
\begin{equation}
\label{Eq:NodeInput}
u_{i,k}(t) \triangleq \sum_{\substack{j\in\N_N}}
\sum_{l\in\N_{N_j}} M_{ij,kl}x_{j,l}(t) + w_{i,k}(t),
\end{equation}
where $w_{i,k}(t)$ is a disturbance input that accounts for both modeling errors in transmission rates and unmodeled exogenous infection effects at the node $\Sigma_{i,k}, k\in\N_{N_i}, i\in\N_N$.

\subsection{Hierarchical Networked System Representation}

We propose to represent the considered spreading network $\Sigma$ \eqref{Eq:NodeDynamics}-\eqref{Eq:NodeInput} as a hierarchy of networked systems (i.e., a hierarchical networked system), as shown in Fig. \ref{Fig:SpreadingNetwork}.

To see this, we first split the input $u_{i,k}(t),\, k\in\N_{N_i}, i\in\N_N$ in \eqref{Eq:NodeInput} into local and global components, respectively denoted by $u_{i,k}^L(t)$ and $u_{i,k}^G(t)$, as   
$$
u_{i,k}(t) = u_{i,k}^L(t) + u_{i,k}^G(t),
$$
where $u_{i,k}^L(t)$ is defined using the intra-group interactions as
\begin{equation}\label{Eq:NodeInputLocal}
u_{i,k}^L(t) \triangleq \sum_{l\in\N_{N_i}} M_{ii,kl} x_{i,l}(t), 
\end{equation}
and $u_{i,k}^G(t)$ is defined using the inter-group interactions as
\begin{equation}\label{Eq:NodeInputGlobal}
u_{i,k}^G (t) \triangleq  
\sum_{\substack{j\in\N_N \\ j \neq i }}
\sum_{l\in\N_{N_j}} M_{ij,kl}x_{j,l}(t) + w_{i,k}(t).   
\end{equation}
We next vectorize \eqref{Eq:NodeInputLocal} over $k,l\in\N_{N_i}$ to obtain
\begin{equation}\label{Eq:GroupInputLocal}
    u_i^L(t) = M_{ii} x_i(t),
\end{equation}
where $u_i^L(t) \triangleq [u_{i,k}^L(t)]_{k\in\N_{N_i}}^\T$ and $x_i(t)\triangleq [x_{i,k}]_{k\in\N_{N_i}}^\T$ are the vectorized local input and state of the group $\Sigma_i,i\in\N_N$, respectively. We then  vectorize \eqref{Eq:NodeInputGlobal} over $k,l\in\N_{N_i}$ to obtain 
\begin{equation}\label{Eq:GroupInputGlobal}
    u_i^G(t) = \sum_{\substack{j\in\N_N\\j \neq i}} M_{ij} x_j(t) + w_i(t),
\end{equation}
where $u_i^G(t) \triangleq [u_{i,k}^G(t)]_{k\in\N_{N_i}}^\T$ and $w_i(t)\triangleq [w_{i,k}(t)]_{k\in\N_{N_i}}^\T$ are the vectorized global input and disturbance of the group $\Sigma_i,i\in\N_N$, respectively. We also define 
\begin{equation}\label{Eq:GroupInput}
u_i(t) \triangleq u_i^L(t) + u_i^G(t) = M_{ii} x_i(t) +  u_i^G(t)
\end{equation} 
as the vectorized total input of the group $\Sigma_i,i\in\N_N$. 

Inspired by \eqref{Eq:GroupInput}, each group $\Sigma_i,i\in\N_N$ now can be seen as a networked system $\Sigma_i:u_i^G(t) \rightarrow x_i(t)$ (see Fig. \ref{Fig:NetworkedSystem}), comprised of the nodes $\{\Sigma_{i,k}: k\in\N_{N_i}\}$ interconnected according to the interconnection matrix 
\begin{equation}\label{Eq:IntraGroupInterconnections}
\textit{\textbf{M}}_i = \bm{M_{ii} & \I \\ \I & \0},    
\end{equation}
where the external input is $u_i^G(t)$ and the output is $x_i(t)$.

Next, we continue vectorizing \eqref{Eq:GroupInputGlobal} over $i\in\N_N$, to obtain 
\begin{equation}\label{Eq:NetworkInput}
    u^G(t) = \tilde{M} x(t) + w(t),
\end{equation}
where $u^G(t) \triangleq [(u_{i}^{G}(t))^\T]_{i\in\N_{N_i}}^\T$,  $x(t)\triangleq [x_{i}^\T(t)]_{k\in\N_{N_i}}^\T$ and $w(t)\triangleq [w_{i}^\T(t)]_{k\in\N_{N_i}}^\T$ are the vectorized global input, state and disturbance, respectively, and $\tilde{M} \triangleq [\tilde{M}_{ij}]_{i,j\in\N_N}$ with $\tilde{M}_{ij} \triangleq M_{ij}\mb{1}_{\{i \neq j\}}$ is the transmission matrix representing all inter-group interactions over the spreading network.

Finally, inspired by \eqref{Eq:NetworkInput}, we now can view the spreading network $\Sigma$ as a networked system $\Sigma: w(t) \rightarrow x(t)$ (see Fig. \ref{Fig:SpreadingNetwork}, and also Fig. \ref{Fig:NetworkedSystem}), comprised of the groups $\{\Sigma_i: i\in\N_N\}$ interconnected according to the interconnection matrix 
\begin{equation}
    \label{Eq:InterGroupInterconnections}
    \textit{\textbf{M}} = \bm{\tilde{M} & \I \\ \I & \0},
\end{equation}
where the external input affecting the spreading network is $w(t)$, which represents the disturbances due to both modeling errors and unmodeled exogenous effects, and the performance output of the spreading network is $x(t)$, which represents the infected proportions across the nodes. 

\begin{figure}[!h]
    \centering
\includegraphics[width=0.65\linewidth]{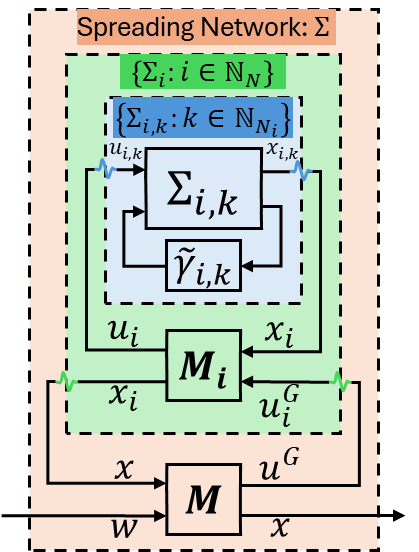}
    \caption{Spreading network representation as a hierarchical networked system.}
    \label{Fig:SpreadingNetwork}
\end{figure}

\subsection{Spreading Network Control Problem}

Given the considered spreading network model and the proposed hierarchical networked system representation, the interested spreading network control problem addressed in this paper can now be formally stated as follows. Note that here we use $\bar{M}$ to denote a known nominal value of the inter-group interconnections matrix $\tilde{M}$ in \eqref{Eq:NetworkInput}-\eqref{Eq:InterGroupInterconnections}.

\begin{problem}\label{Prob:MainProblem}
Optimally redesign the inter-group interconnections matrix $\tilde{M}$ while minimizing $J(\tilde{M}) \triangleq \Vert \tilde{M} - \bar{M} \Vert$ subject to hard constraints 
$\Vert \tilde{M} - \bar{M} \Vert_2 \leq \delta_2 \Vert \bar{M} \Vert_2$ and $- \delta_l \bar{M} \leqew \tilde{M} - \bar{M} \leqew  \delta_u \bar{M}$ (for some given $\delta_2,\delta_l,\delta_u \in [0,1]$), so that the infection-free state (i.e., $x = \0$) of the redesigned spreading network is stable, optimally $\mathfrak{X}$-dissipative (from $w$ to $x$) and scalable mesh stable. 
\end{problem}

\begin{remark}\label{Rm:DissipativityAnalysis}
As we will see in the sequel, the developed spreading network control solution to address the above problem, with some minor changes (similar in spirit to those mentioned in Rm. \ref{Rm:Analysis}), can also be used to analyze a given spreading network (i.e., to analyze stability, dissipativity and mesh stability of its infection-free state). Therefore, in this paper, we do not explicitly state the spreading network analysis problem or its solution to avoid repetition. 
\end{remark}

\begin{remark}\label{Rm:IntraGroupInterconnections}
The developed spreading network control solution to address the above problem focuses on redesigning inter-group interactions (i.e., $\tilde{M}$) while treating intra-group interactions (i.e., $M_{ii}, \forall i\in\mathcal {N}_N$) as fixed. However, as we will see in the sequel, the developed spreading network control solution, with some minor modifications, can also be used to identify optimal alterations for the intra-group interactions necessary to enforce the stability and robustness of the infection-free state for the spreading network. 
\end{remark}

\section{Dissipativity-Based Solution}
\label{Sec:Solution}

\subsection{Hierarchical Dissipativity Analysis and Design}

We begin by establishing the dissipativity properties of individual nodes in the considered spreading network. 


\begin{lemma}\label{Lm:NodeDissipativity}
Each node $\Sigma_{i,k},\, k\in\N_{N_i},\,i\in\N_N$ \eqref{Eq:NodeDynamics} in the spreading network is $X_{i,k}$-dissipative (from $u_{i,k}(t)$ to $x_{i,k}(t)$) where $X_{i,k} \triangleq \scriptsize\bm{a & b\\ b & c}$ is given by the LMI problem:  
\begin{equation}
\label{Eq:Lm:NodeDissipativity}
\begin{aligned}
\mbox{Find: }&\ p,\ a,\ b,\ c,\  \\ 
\mbox{Sub. to: }&\ p > 0,\, a > 0,\, b\leq 0,\\
&\bm{a & b\\ b & c + p(\bar{\gamma}_{i,k}-\delta_{i,k})} \geq 0.
\end{aligned}
\end{equation}
\end{lemma}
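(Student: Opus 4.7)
The plan is to recognize that each node $\Sigma_{i,k}$ defined by \eqref{Eq:NodeDynamics} is, up to renaming of symbols, exactly an instance of the scalar dynamic system $\Sigma_s$ studied in Proposition \ref{Prop:ScalarSystem}. Concretely, under the substitutions $x \leftarrow x_{i,k}$, $u \leftarrow u_{i,k}$, $\gamma(t) \leftarrow \gamma_{i,k}(t)$, $\bar{\gamma} \leftarrow \bar{\gamma}_{i,k}$, $\delta \leftarrow \delta_{i,k}$, and with the natural output choice $y_{i,k}(t) = x_{i,k}(t)$, the node dynamics coincide line-for-line with \eqref{Eq:ScalarSystem}. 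So the proof reduces to a direct invocation of Proposition \ref{Prop:ScalarSystem}.

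Before invoking it, I would verify that the three structural hypotheses of Proposition \ref{Prop:ScalarSystem} carry over. First, $x_{i,k}(t) \in [0,1]$ follows from the SIS interpretation of the state as an infected proportion. Second, the bound $\vert \gamma_{i,k}(t) - \bar{\gamma}_{i,k} \vert \leq \delta_{i,k}$ is built into the decomposition $\gamma_{i,k}(t) = \bar{\gamma}_{i,k} + \tilde{\gamma}_{i,k}(t)$ stipulated in Section \ref{Sec:Formulation}. Third, $u_{i,k}(t) \geq 0$ follows from \eqref{Eq:NodeInput}, since all transmission rates $M_{ij,kl}$ are nonnegative entries of the transmission matrices, the neighbor states $x_{j,l}(t)$ lie in $[0,1]$, and the disturbance $w_{i,k}(t)$ represents modeling error in the nonnegative transmission term and is treated in the admissible regime that keeps $u_{i,k}(t)$ nonnegative.

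With these conditions confirmed, the conclusion of Proposition \ref{Prop:ScalarSystem} gives that $\Sigma_{i,k}$ is $X_{i,k}$-dissipative from $u_{i,k}$ to $x_{i,k}$ for any $X_{i,k} = \scriptsize\bm{a & b \\ b & c}$ satisfying \eqref{Eq:Prop:ScalarSystem} with $\bar{\gamma}$ and $\delta$ replaced by $\bar{\gamma}_{i,k}$ and $\delta_{i,k}$, which is exactly the LMI problem \eqref{Eq:Lm:NodeDissipativity}. No additional storage function construction or Schur-complement manipulation is needed, since all of that work has already been done in the proof of Proposition \ref{Prop:ScalarSystem}.

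The only subtle point, and the one I would flag as the main obstacle, is the sign/admissibility of $u_{i,k}(t)$ once the disturbance $w_{i,k}$ is included. A clean way to handle this is to observe that $w_{i,k}$ appears additively inside the same quadratic supply rate as $u_{i,k}$, so the $X_{i,k}$-dissipativity certificate carries through for the combined input, and any short excursion of $u_{i,k}$ outside $\R_{\geq 0}$ can be treated by the standard trajectory-wise argument used in Proposition \ref{Prop:ScalarSystem}, since $\bar{\Phi}(u,x) \geq 0$ there was enforced on the admissible operating region for which the network-level analysis in Section \ref{Sec:Formulation} is formulated. Once this is acknowledged, the lemma follows as a one-line application of Proposition \ref{Prop:ScalarSystem}.
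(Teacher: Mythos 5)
Your proposal is correct and follows essentially the same route as the paper, whose proof is likewise a direct invocation of Proposition~\ref{Prop:ScalarSystem} after noting that the node dynamics \eqref{Eq:NodeDynamics} are identical to the scalar system \eqref{Eq:ScalarSystem}. Your additional verification of the hypotheses (state in $[0,1]$, the recovery-rate bound, and nonnegativity of $u_{i,k}$ in the presence of the disturbance) is more explicit than the paper's one-line argument but does not constitute a different approach.
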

\begin{proof}
The proof follows directly from applying Prop. \ref{Prop:ScalarSystem} for the node dynamics \eqref{Eq:NodeDynamics} as they are identical to scalar system dynamics \eqref{Eq:ScalarSystem} considered in Prop. \ref{Prop:ScalarSystem}. Note that we have omitted the $a \geq 0$ case and only considered the $a > 0$ case in \eqref{Eq:Prop:ScalarSystem} when obtaining \eqref{Eq:Lm:NodeDissipativity}. As we will see in the sequel, this (limiting to $a > 0$) helps streamline the subsequent dissipativity analysis and control stages. 
\end{proof}

Next, we use the identified node dissipativity properties in Lm. \ref{Lm:NodeDissipativity} to establish the dissipativity properties of the corresponding groups in the considered spreading network.

\begin{lemma}\label{Lm:GroupDissipativity}
Each group $\Sigma_i,\, i\in\N_N$ (see Fig. \ref{Fig:SpreadingNetwork}) in the spreading network is $\mathcal{X}_i$-dissipative (from $u_i^G(t)$ to $x_i(t)$) where $\mathcal{X}_i \triangleq [\mathcal{X}_i^{lm}]_{l,m\in\N_2}$ is given by the LMI problem:
\begin{equation}\label{Eq:Lm:GroupDissipativity}
    \begin{aligned}
    \mbox{Find: }& \mathcal{X}_i, \{p_{i,k}: k\in\mathbb{N}_{N_i}\}\\
    \mbox{Sub. to: }& p_{i,k} > 0,\ \forall k\in\mathbb{N}_{N_i},\ \Phi_i > 0,
    \end{aligned}
\end{equation}
where $\Phi_i$ takes the form
\begin{equation}
\label{Eq:Lm:GroupDissipativity2}
\Phi_i \triangleq 
\begin{bmatrix} 
\textbf{X}_{p_i}^{11} & \textbf{0} & \textbf{X}_{p_i}^{11}M_{ii} & \textbf{X}_{p_i}^{11} \\
\star & -\mathcal{X}^{22}_i & -\mathcal{X}^{22}_i & \0\\ 
\star & \star & - \mathcal{H}(\textbf{X}_{p_i}^{21}M_{ii})-\textbf{X}_{p_i}^{22} & -\textbf{X}_{p_i}^{21}+\mathcal{X}^{21}_i \\
\star & \star & \star &  \mathcal{X}^{11}_i
\end{bmatrix}
\end{equation}
and each $\textbf{X}_{p_i}^{lm}$ term for any $l,m\in\N_2$ is defined using the identified node dissipativity properties $\{X_{i,k}^{lm}:k\in\N_{N_i}\}$ in Lm. \ref{Lm:NodeDissipativity} as $\textbf{X}_{p_i}^{lm} \triangleq diag([p_{i,k}X_{i,k}^{lm}]_{k\in\N_{N_i}})$.
\end{lemma}
\begin{proof}
Inspired by the networked system representation (see Fig. \ref{Fig:SpreadingNetwork}) of each group $\Sigma_i,\, i\in\N_N$, for its $\mathcal{X}_i$-dissipativity analysis, we apply Co. \ref{Co:NetworkedSystem}.
To this end, we first have to ensure that As. \ref{As:SubsystemDissipativity} required in Co. \ref{Co:NetworkedSystem} holds for each node $\Sigma_{i,k},\, k\in\N_{N_i}$ in group $\Sigma_i, i\in\N_N$. 
This is automatically satisfied by the dissipativity properties $X_{i,k}$ of each node $\Sigma_{i,k},\,k\in \N_{N_i}$ identified in Lm. \ref{Lm:NodeDissipativity} as $X_{i,k}^{11} = a > 0$ is enforced in \eqref{Eq:Lm:NodeDissipativity}.  
Note also that the assumed constraints \eqref{Eq:InterconnectionMatrix2} in Co. \ref{Co:NetworkedSystem} are consistent with the intra-group interactions structure \eqref{Eq:IntraGroupInterconnections}. 

Next, as Co. \ref{Co:NetworkedSystem} is intended for control design problems (not for analysis problems), following Rm. \ref{Rm:Analysis}, we introduce an extra constraint for the design variable $L_{uy}$ in Co. \ref{Co:NetworkedSystem} as $L_{uy} = \textbf{X}_{p}^{11}M_{uy}$.
Finally, we simplify the LMI problem \eqref{Eq:Co:NetworkedSystem} in Co. \ref{Co:NetworkedSystem}, with the appropriate substitutions: $M_{uy} = M_{ii}$, $\textbf{X}_{p}^{lm} = \textbf{X}_{p_i}^{lm}$, $\X^{lm} = \X_i^{lm}$, and $X_i^{lm} = X_{i,k}^{lm},\, \forall l,m\in\N_2$, which leads to the LMI problem given in \eqref{Eq:Lm:GroupDissipativity}. 
\end{proof}

Finally, we use the identified group dissipativity properties in Lm. \ref{Lm:GroupDissipativity} to optimally redesign the inter-group interconnections $\tilde{M}$ in \eqref{Eq:NetworkInput}-\eqref{Eq:InterGroupInterconnections} to enforce (and optimize) stability, dissipativity, and scalable mesh stability properties of the redesigned spreading network. In this pursuit, as stated in the Prob. \ref{Prob:MainProblem}, we also optimize the deviation of $\tilde{M}$ from its nominal value $\bar{M}$ subject to some given hard constraints.

\begin{theorem}\label{Th:NetworkDissipativity}
Addressing Prob. \ref{Prob:MainProblem}, the spreading network $\Sigma$ shown in Fig. \ref{Fig:SpreadingNetwork} can be made stable, optimally $\mathfrak{X}$-dissipative (where $\mathfrak{X} \triangleq [\mathfrak{X}^{lm}]_{l,m\in\N_2}$) and scalable mesh stable, by redesigning the inter-group interaction matrix $\tilde{M}$ using the LMI problem
\begin{equation}\label{Eq:Th:NetworkDissipativity}
\begin{aligned}
\min_{\substack{L, \mathfrak{X} \\ \{p_i: i\in\mathbb{N}_N\}}}\ 
&\tilde{J}(L,\mathfrak{X}) = \Vert L - \mathbfcal{X}_p^{11}\bar{M} \Vert + \bar{J}(\mathfrak{X}) \\ 
\mbox{Sub. to: }& p_i > 0,\ \forall i\in\mathbb{N}_N,\ \Phi > 0,\\
& \tilde{\lambda}_{i1} \sum_{j\in\N_N} \vert (\X_i^{11})^{-1} L^{ij} \vert < p_i,\ \forall i \in \N_N,\\  
& 0 \leq \bm{
    \I & Y^\T & \0\\
    Y & \mathbfcal{X}_p^{11}Y^\T + Y\mathbfcal{X}_p^{11}  & L-\mathbfcal{X}_p^{11}\bar{M} \\ 
    \0 & L^\T - \bar{M}^\T \mathbfcal{X}_p^{11} & \delta^2_2 \bar{M}^\T \bar{M}},\\
& - \delta_l \mathbfcal{X}_p^{11}\bar{M} \ll L - \mathbfcal{X}_p^{11}\bar{M} \ll  \delta_u \mathbfcal{X}_p^{11}\bar{M},
\end{aligned}
\end{equation}
where $\Phi$ takes the form 
\begin{equation}
\label{Eq:Th:NetworkDissipativity2}
\Phi \triangleq 
\begin{bmatrix} 
\mathbfcal{X}_p^{11} & \textbf{0} & L & \mathbfcal{X}_p^{11} \\
\star & - \mathfrak{X}^{22} & -\mathfrak{X}^{22} & \0\\ 
\star & \star & -\mathcal{H}(\mathbfcal{X}^{21}L)-\mathbfcal{X}_p^{22} & -\mathbfcal{X}_p^{21}+\mathfrak{X}^{21} \\
\star & \star & \star & \mathfrak{X}^{11}
\end{bmatrix},
\end{equation}
and each $\mathbfcal{X}_{p}^{lm}$ term for any $l,m\in\N_2$ is defined using the identified group dissipativity properties $\{\mathcal{X}_i^{lm}:i\in\N_N\}$ in Lm. \ref{Lm:GroupDissipativity} as 
$\mathbfcal{X}_{p}^{lm} \triangleq diag([p_i \mathcal{X}_{i}^{lm}]_{i\in\N_{N_i}})$, 
$\mathbfcal{X}^{12}\triangleq diag([(\mathcal{X}_i^{11})^{-1}\mathcal{X}_i^{12}]_{i\in\mathbb{N}_N})$, $\mathbfcal{X}^{21}\triangleq (\mathbfcal{X}^{12})^\top$, 
$\tilde{\lambda}_{i1} \triangleq \sqrt{\frac{\overline{\lambda}(R_i)\overline{\lambda}(P_i)}{\underline{\lambda}(P_i)\underline{\lambda}(Q_i)}}$ (where $P_i\triangleq \diag([\tilde{p}_{i,k}\bar{p}_{i,k}]_{k\in\N_{N_i}})$ with $\tilde{p}_{i,k} \triangleq p_{i,k}$ form Lm. \ref{Lm:GroupDissipativity} and $\bar{p}_{i,k}\triangleq p$ values from Lm. \ref{Lm:NodeDissipativity}, and $Q_i$ and $R_i$ satisfy $(\X_i^{22}+\X_i^{21}) \leq -Q_i < 0$ and $(\X_i^{11} + \X_i^{12}) \leq R_i$), $Y \triangleq \I$ (see also Rm. \ref{Rm:BMIforY}), $\bar{J}(\mathfrak{X})$ is a convex function of $\mathfrak{X}$, $L$ shares the sparsity structure of $\tilde{M}$, and 
$\tilde{M} \triangleq (\mathbfcal{X}_p^{11})^{-1} L.$
\end{theorem}
\begin{proof}
Based on the networked system representation  \eqref{Eq:InterGroupInterconnections} of the spreading network $\Sigma$, to synthesize its interconnection matrix $\tilde{M}$ to enforce (and optimize) stability, $\mathfrak{X}$-dissipativity and scalable mesh stability, we apply Pr. \ref{Pr:NetworkedSystemSMS}. To this end, we first have to ensure that As. \ref{As:SubsystemDissipativity2} required in Pr. \ref{Pr:NetworkedSystemSMS} holds for each group $\Sigma_i,\, i\in\N_{N}$ in the spreading network. 
The first condition in As. \ref{As:SubsystemDissipativity2} is automatically satisfied by the dissipativity properties $\X_i$ of each group $\Sigma_i,\,i\in \N_N$ identified in Lm. \ref{Lm:GroupDissipativity} as $\X_i^{11} > 0$ is enforced in \eqref{Eq:Lm:GroupDissipativity}. 
For the second condition in As. \ref{As:SubsystemDissipativity2}, as the corresponding quadratic storage function $V_i(x_i) \triangleq x_i^\T P_i x_i$ for each group $\Sigma_i,\,i\in \N_N$, it is clear that we can use the given $P_i$ definition based on $p_{i,k}$ and $p$ values respectively observed in Lemmas \ref{Lm:NodeDissipativity} and \ref{Lm:GroupDissipativity}. 
Note also that the assumed constraints \eqref{Eq:InterconnectionMatrix2} in Prop. \ref{Pr:NetworkedSystemSMS} are consistent with the inter-group interactions structure \eqref{Eq:InterGroupInterconnections}. 
Consequently, starting from \eqref{Eq:Pr:NetworkedSystemSMS}, using the appropriate substitutions: $L_{uy} = L$, $M_{uy} = \tilde{M}$, $\textbf{X}_{p}^{lm} = \mathbfcal{X}_p^{lm}$, $\X^{lm} = \mathfrak{X}^{lm}$, and $X_i^{lm} = \X_i^{lm},\, \forall l,m\in\N_2$, we can obtain the first three LMI constraints in \eqref{Eq:Th:NetworkDissipativity}. 

Next, to enforce the soft and hard constraints on deviations of $\tilde{M}$ from its nominal value $\bar{M}$ (as required in Prob. \ref{Prob:MainProblem}), we apply Propositions \ref{Pr:AlternativeObjective} and \ref{Pr:AlternativeConstraints}. This provides the first term in the joint objective function $\tilde{J}(L,\mathfrak{X})$, and the fourth and fifth LMI constraints in \eqref{Eq:Th:NetworkDissipativity}. Note that the $\delta_2$ parameter appearing in the fourth LMI constraint in \eqref{Eq:Th:NetworkDissipativity} can also be treated as a design variable to be minimized, and the fifth LMI constraint in \eqref{Eq:Th:NetworkDissipativity} requires each $\X_i^{11}$ obtained in Lm. \ref{Lm:GroupDissipativity} to be diagonal (see Prop. \ref{Pr:AlternativeConstraints}). 
Finally, we point out that we have included the convex term $\bar{J}(\mathfrak{X})$ as the second term in $\tilde{J}(L,\mathfrak{X})$ to enable joint optimization of inter-group interconnections $\tilde{M}$ (via $L$) and the spreading network dissipativity $\mathfrak{X}$. 
\end{proof}

\subsection{Supporting Feasibility with Necessary Conditions}

Inspired by Prop. \ref{Prop:NecessaryConditions}, the following lemma presents a necessary condition to be included in each node dissipativity analysis (in Lm. \ref{Lm:NodeDissipativity}) to support the feasibility of the corresponding subsequent group dissipativity analysis (in Lm. \ref{Lm:GroupDissipativity}).

\begin{lemma}\label{Lm:NecessaryConForGroupDis}
At each group $\Sigma_i,\, i\in\N_N$, for the feasibility of its $\X_i$-dissipativity analysis problem (in Lm. \ref{Lm:GroupDissipativity}), a necessary condition that can be enforced at each node $\Sigma_{i,k}, k\in\N_{N_i}$ in its $X_{i,k}$-dissipativity analysis problem (in Lm. \ref{Lm:NodeDissipativity}, $X_{i,k} \triangleq \scriptsize\bm{a & b \\ b & c} \normalsize$) is the feasibility of the LMI problem:
\begin{equation}\label{Eq:Lm:NecessaryConForGroupDis}
\begin{aligned}
\mbox{Find: }&\ a,\, b,\, c,\, \bar{a},\, \bar{b},\, \bar{c},\\
\mbox{Sub. to: }
&\tilde{\Phi}_{i,k} \triangleq 
\begin{bmatrix} 
a & 0 & a\,m & a \\
\star & -\bar{c} & -\bar{c} & 0\\ 
\star & \star & -2b\,m - c & - b + \bar{b} \\
\star & \star & \star &  \bar{a}
\end{bmatrix} > 0,
\end{aligned}
\end{equation}
where $\scriptsize \bm{\bar{a} & \bar{b}\\ \bar{b} & \bar{c}}$ shares the sparsity structure of $[\mathcal{X}_{i,k}^{lm}]_{l,m\in\N_2}$ ($\mathcal{X}_{i,k}^{lm}$ is the $k$\tsup{th} diagonal element of $\mathcal{X}^{lm}_i$ in \eqref{Eq:Lm:GroupDissipativity}), and $m \triangleq M_{ii,kk}$ is the $k$-th diagonal element of $M_{ii}$. 
\end{lemma}
\begin{proof}
The proof follows directly from applying Prop. \ref{Prop:NecessaryConditions} for the networked system representation \eqref{Eq:GroupInput}-\eqref{Eq:IntraGroupInterconnections} of each group $\Sigma_i,\, i\in\N_N$, particularly for for each group's $\mathcal{X}_i$-dissipativity analysis (in Lm. \ref{Lm:GroupDissipativity}). 
\end{proof}

Similar to Lm. \ref{Lm:NecessaryConForGroupDis}, inspired by Prop. \ref{Prop:NecessaryConditions}, the following lemma presents a necessary condition to be included in each group dissipativity analysis (in Lm. \ref{Lm:GroupDissipativity}) to support the feasibility of the subsequent spreading network design problem (in Th. \ref{Th:NetworkDissipativity}). 

\begin{lemma}\label{Lm:NecessaryConForGroupDis}
For the feasibility of $\mathfrak{X}$-dissipative spreading network design problem (in Lm. \ref{Th:NetworkDissipativity}), a necessary condition that can be enforced at each group $\Sigma_i,\, i\in\N_{N}$ in its $\mathcal{X}_i$-dissipativity analysis problem (in Lm. \ref{Lm:GroupDissipativity}) is the feasibility of the LMI problem:
\begin{equation}\label{Eq:Lm:NecessaryConForNetworkDis}
\begin{aligned}
\mbox{Find: }&\ \mathcal{X}_i, \bar{\mathfrak{X}}_i\\
\mbox{Sub. to: }&\ 
\tilde{\Phi}_i \triangleq 
\begin{bmatrix} 
\mathcal{X}_i^{11} & \0 & \0 & \mathcal{X}_i^{11} \\
\star & -\bar{\mathfrak{X}}^{22}_i & -\bar{\mathfrak{X}}^{22}_i & \0\\ 
\star & \star &  - \mathcal{X}_i^{22} & -\mathcal{X}_i^{21}+\bar{\mathfrak{X}}^{21}_i \\
\star & \star & \star & \bar{\mathfrak{X}}^{11}_i
\end{bmatrix} > 0,
\end{aligned}
\end{equation}
where $\bar{\mathfrak{X}}_i \triangleq [\bar{\mathfrak{X}}_{i}^{lm}]_{l,m\in\N_2}$ shares the sparsity structure of $[\mathfrak{X}_{i}^{lm}]_{l,m\in\N_2}$ ($\mathfrak{X}_{i}^{lm}$ is the $i$\tsup{th} diagonal block of $\mathfrak{X}^{lm}$ in \eqref{Eq:Th:NetworkDissipativity}).
\end{lemma}
\begin{proof}
The proof follows directly from applying Prop. \ref{Prop:NecessaryConditions} for the networked system representation \eqref{Eq:NetworkInput}-\eqref{Eq:InterGroupInterconnections} of the spreading network, particularly for its $\mathfrak{X}$-dissipative spreading network design (in Th. \ref{Th:NetworkDissipativity}). Note that, here we have used the fact that $\tilde{M}_{ii} = \0, i\in\N_N$ to further simplify the derived necessary condition and obtain \eqref{Eq:Lm:NecessaryConForNetworkDis}.
\end{proof}


\begin{figure*}[!b]
\hrulefill
\centering
\begin{equation}
\label{Eq:Th:NetworkDissipativity2}
\Phi \triangleq 
\begin{bmatrix} 
\mathbfcal{X}_p^{11} & \textbf{0} & L & \mathbfcal{X}_p^{11} \\
\star & -\mathfrak{X}^{22} & -\mathfrak{X}^{22} & \0\\ 
\star & \star & -\mathcal{H}(\mathbfcal{X}^{21}L)-\mathbfcal{X}_p^{22} & -\mathbfcal{X}_p^{21}+\mathfrak{X}^{21} \\
\star & \star & \star & \mathfrak{X}^{11}
\end{bmatrix}
\end{equation}
\end{figure*}

\subsection{Overall Spreading Network Control Solution}

Combining the above theoretical results, we provide the following algorithm that summarizes the proposed hierarchical spreading network control solution, i.e., the complete spreading network design process that addresses the considered Problem \ref{Prob:MainProblem} in this paper. The accompanying theorem summarizes the overall outcome. 

\begin{algorithm}[!h]
\caption{Proposed Spreading Network Design Process}\label{Alg:NetworkDesign}
\begin{algorithmic}[1]
\State \textbf{Input: } 
$\{(\bar{\gamma}_{i,k},\delta_{i,k}): k\in\N_{N_i}, i\in\N_N\}$, $\bar{M}$ and $\{M_{ii}:i\in\N_N\}$.  
\For{$i=1,2,\ldots,N$} \Comment{Analyzing Groups}
\For{$k=1,2,\ldots,N_i$} \Comment{Analyzing Nodes}
\State Find $X_{i,k}$ by solving \eqref{Eq:Lm:NodeDissipativity} with $\tilde{\Phi}_{i,k}>0$ in \eqref{Eq:Lm:NecessaryConForGroupDis}.
\EndFor
\State Find $\mathcal{X}_i$ by solving \eqref{Eq:Lm:GroupDissipativity} with 
$\tilde{\Phi}_{i}>0$ in \eqref{Eq:Lm:NecessaryConForNetworkDis}.
\EndFor
\State Find $\tilde{M}$, $\mathfrak{X}$ by solving \eqref{Eq:Th:NetworkDissipativity}. \Comment{Designing the Network}
\end{algorithmic}
\end{algorithm}

\begin{theorem}
Given the node recovery rate information, intra-group interactions, and nominal inter-group interactions, addressing Prob. \ref{Prob:MainProblem}, we can enforce the spreading network $\Sigma$ shown in Fig. \ref{Fig:SpreadingNetwork} to be stable, optimally $\mathfrak{X}$-dissipative (where $\mathfrak{X} \triangleq [\mathfrak{X}^{lm}]_{l,m\in\N_2}$), and scalable mesh stable, by designing the interconnection matrix $\tilde{M}$ following the design process outlined in Alg. \ref{Alg:NetworkDesign}. 
\end{theorem}

\begin{proof}
The proof is complete by noticing that the LMI problem pairs  \{\eqref{Eq:Lm:NodeDissipativity},\eqref{Eq:Lm:NecessaryConForGroupDis}\} (at each node) and \{\eqref{Eq:Lm:GroupDissipativity}, \eqref{Eq:Lm:NecessaryConForNetworkDis}\} (at each group) are compatible with each other so that they can be respectively combined as stated in Step 4 and Step 6 of the Alg. \ref{Alg:NetworkDesign}. Finally, upon solving these combined LMI problems at all nodes and groups, we obtain the necessary (and supportive) group dissipativity information to solve the main spreading network design problem \eqref{Eq:Th:NetworkDissipativity} (established in Th. \ref{Th:NetworkDissipativity}) as stated in Step 8 of the Alg. \ref{Alg:NetworkDesign}. 
\end{proof}

\begin{remark}\label{Rm:DissipativityAnalysis2}
As stated in Rm. \ref{Rm:DissipativityAnalysis}, for the purpose of spreading network analysis (e.g., of its stability, dissipativity, and/or mesh-stability) under a given inter-group interconnections $\tilde{M}=\bar{M}$, we can use the proposed spreading network design process (i.e., Alg. \ref{Alg:NetworkDesign}), after some minor changes. One approach is to set $\delta_l = \delta_u = \delta_2 = 0$ in \eqref{Eq:Th:NetworkDissipativity}. Alternatively, the design variable $L$ can be removed entirely from \eqref{Eq:Th:NetworkDissipativity} by substituting $L = \X_p^{11}\bar{M}$. Subsequently, the feasibility (or infeasibility) of this revised Alg. \ref{Alg:NetworkDesign} will imply the success (or inconclusiveness) of the analysis. It is worth noting that the identified supporting conditions \eqref{Eq:Lm:NecessaryConForGroupDis} and \ref{Eq:Lm:NecessaryConForNetworkDis} included in Alg. \ref{Alg:NetworkDesign} significantly reduce the possibility for infeasibility in both analysis and design of spreading networks.   
\end{remark}

\begin{remark}\label{Rm:ExtendingBeyond}
The proposed framework's flexibility to conduct both analysis and control (design) at different levels of the hierarchical spreading network is highly beneficial. For example, it allows us to redesign intra-group interconnections if a specific group is not sufficiently dissipative (thus compromising the overall performance of the spreading network). Similarly, it allows us to provide recommendations for improving node characteristics (e.g., using localized pharmaceutical intervention method) to achieve better node-level dissipativity properties that support improving group- and network-level dissipativity properties. Moreover, the same flexibility will play a major role when there are higher layers beyond the network level (e.g., a global network layer) in the considered hierarchical spreading network model.
\end{remark}

\section{Simulation Results}
\label{Sec:Simulation}

To provide operational details and showcase the benefits of the proposed dissipativity-based spreading network design technique, we implement it along with several of its variants and two standard techniques for a randomly generated spreading network. In particular, we report the variation of the average infection level observed under different design methods over an extended period, where different disturbances were injected into the spreading network. We also report the percentage reductions in the inter-group interconnections exercised by different design methods to measure the exerted spreading network design effort.

\subsection{Considered Spreading Network}
As shown in Fig. \ref{Fig:1a}-\ref{Fig:1b}, the considered spreading network consists of four groups $G_1,G_2,G_3,G_4$ with $5,6,7,4$ nodes, respectively. The mean recovery rates (i.e., $\bar{\gamma}_{i,k}$) for the nodes were selected from the uniform random distribution in the interval $[0.4, 0.9]$. The recovery rate deviation bound (i.e., $\delta_{i,k}$) was set as $5\%$ of the selected recovery rate at each node. The initial infection level for each node was selected from the uniform random distribution in the interval $[0, 1]$.

The intra-group interconnections were created randomly with a probability of $0.3$ for any node pair in a group, and the interconnection levels were selected from the uniform random distribution in the interval $[0.1, 0.4]$. While we made no assumption regarding the group connectivity (like strong connectivity), we scaled down the intra-group interconnections to ensure group stability of the infection-free state when isolated from other groups. This scaling was done only to create an ideal reference behavior to compare with other spreading network design methods. On the other hand, the inter-group interconnections were created randomly with a probability of $0.2$ for any node pair in the spreading network, and the interconnection levels were selected from the uniform random distribution in the internal $[0.1,0.4]$ without any scaling.

The disturbance input (i.e., $w_{i,k}(t)$) affecting each node was chosen to include four distinct components, each active during specific time windows and triggered asynchronously at different nodes: a brief, low-amplitude sinusoidal fluctuation during $t\in[40, 45]$, a step offset during $t\in[80,85]$, a toggling component during $t\in[120,130]$, and mild random noise added during $t\in[0,160]$. These short-lived perturbations combine to create a rich disturbance profile to evaluate the different design approaches. 

We simulated this spreading network under different inter-group interconnection configurations given by different spreading network design methods, including the two cases where we used none of and all of the original inter-group interconnections. The observed variations of the average infection level (fraction) of the spreading network over a period $t\in[0,200]$ are shown in Fig. \ref{Fig:2}. Notice that when no inter-group interconnections are used, by design, the spreading network stabilizes to the infection-free state very rapidly overcoming all disturbance conditions. However, when there are inter-group interconnections, the spreading network does not stabilize to the infection-free state. Hence these two average infections profiles provide the best and worst case scenarios for the spreading process to compare other spreading network design methods.

\subsection{Dissipativity-Based Design}
We implemented the proposed dissipativity-based spreading network design method as follows. First, each node $\Sigma_{i,k}, k\in\N_{N_i}, i\in\N_N$ was considered to be IF-OFP($\nu,\rho$) (see Rm. \ref{Rm:X-DissipativityVersions}), and hence, the problem $\mathbb{P}_{i,k}$ was solved by setting $b=0.5$ and optimizing for decision variables $a$ and $c$, representative of the passivity indices $-\nu$ and $-\rho$, respectively. As we plan to establish similar IF-OFP properties at the group level, in problem $\mathbb{P}_{i,k}$, we also set $\bar{b}=0.5$ and included $\bar{a}$ and $\bar{c}$ as decision variables. Overall, when solving the problem $\mathbb{P}_{i,k}$, we optimize the objective function $J_{i,k} \triangleq a+c+\bar{a}+\bar{c}$ to identify the maximum feasible node passivity indices while helping to establish similarly high passivity indices at the group level. 

Next, each group $\Sigma_i, i\in\N_N$ was considered to have a vectored IF-OFP property, and hence the problem $\mathbb{P}_{i}$ was solved by setting $\mathcal{X}_i^{12} = 0.5\I$ and optimizing for diagonal matrices (decision variables) $\mathcal{X}_i^{11}$ and $\mathcal{X}_i^{22}$. As we plan to establish the network L2G($\gamma$) (see Rm. \ref{Rm:X-DissipativityVersions}), in problem $\mathbb{P}_{i}$, we also set $\bar{\mathfrak{X}}_i^{12}=\0$, $\bar{\mathfrak{X}}_i^{22}=-\I$ and $\bar{\mathfrak{X}}_i^{11}=\bar{\gamma}_i\I$ and include $\bar{\gamma}_i$ as a decision variable. Overall, when solving the problem $\mathbb{P}_i$, we optimize the objective function $J_{i} \triangleq \mbox{trace}(\mathcal{X}_i^{11})+\mbox{trace}(\mathcal{X}_i^{22})+\bar{\gamma}_i$ to identify maximum feasible group passivity indices while helping to achieve lowest possible $L_2$ gain values for the spreading network. 

Finally, we set out to design the inter-group interconnections $\tilde{M}$ in the spreading network $\Sigma$ to make it L2G($\gamma$), i.e., finite-gain $L_2$ stable with a gain $\gamma$. 
For this, the problem $\mathbb{P}$ was solved by setting $\mathfrak{X}^{12}=\0$, $\mathfrak{X}^{22}=-\I$ and $\mathfrak{X}^{11} = \bar{\gamma}\I$, and optimizing for $\tilde{M}$ and $\bar{\gamma}$ as decision variables. In particular, when solving problem $\mathbb{P}$, we optimized the objective function:
$$
J = c_M \Vert L - \mathbfcal{X}_p^{11}\bar{M} \Vert_1 + \bar{\gamma}
$$
subject to a limit $\delta_M$ on the fractional deviation allowed for the magnitude of any inter-group interconnection level, i.e.,  
$$
(1-\delta_M)\bar{M} << \tilde{M}  \iff (1-\delta_M) \mathbfcal{X}_p^{11} \bar{M} << L.   
$$
In the baseline dissipativity-based design approach, we used the problem parameters $c_M = 1$ and $\delta_M = 1$. For comparison purposes, we also evaluate the dissipativity-based design for $c_M \in \{10^{-9}, 10^9\}$ (effectively, $c_M \in \{0, \infty\}$) and $\delta_M \in \{0.9, 0.95\}$. Henceforth, these dissipativity-based methods are denoted as DissBC($c_M,\delta_M$), e.g., the baseline dissipativity-based method is denoted as DisBC($1,1$).

\subsection{Threshold and Degree Based Methods}
To compare the performance of the dissipativity-based spreading network designs, we also examine two basic, practical spreading network design methods (see also \cite{Yi2022}). The first approach prunes inter-group interconnections with weights beyond a given threshold $t_M$. By removing the strongest inter-group links, this method reduces the most impactful pathways for disease propagation.  Henceforth, this threshold-based approach is denoted as TBC($t_M$). The second approach identifies the top $d_M$ fraction of the nodes in terms of their inter-group out-degree and removes all their outgoing connections to nodes in other groups. By isolating these highly connected nodes, this method limits the spread of epidemics through critical inter-group links. Henceforth, this degree-based approach is denoted as DegBC($t_M$).

To make the comparison of different spreading network design methods fair, we first identified a ``design effort'' metric:
$$
J_M \triangleq \sum_{\substack{\forall i,j \in\N_N, i\neq j,\\ \forall k\in\N_{N_i},l\in\N_{N_j}}} \frac{\bar{M}^{ij,kl} - \tilde{M}^{ij,kl}}{\bar{M}^{ij,kl}}
$$
representing the average fractional cutdown of the interconnections, observed under each method. The baseline DissBC($1,1$) method observed an average fractional cutdown of $J_c = 87.1\%$, and hence the parameters $t_M$ and $d_M$ respectively of TBC($t_M$) and DegBC($d_M$) methods were tuned to achieve the similar $J_M$ level. The resulting parameters were $t_M = 0.18$ and $d_M = 0.72$.

\subsection{Results}

Based on the reported average infection levels in Fig. \ref{Fig:2}, the baseline DissBC($1,1$) method provided the closest behavior to the best case (i.e., the case without interconnections). The TBC($0.18$) and DegBC($0.72$) methods while achieved stability, were considerably slow to respond to various disturbances. This effectiveness of the proposed DissBC($1,1$) method can be attributed to the systematic dynamics-aware LMI-based design process, which optimally tailors the inter-group interconnections while optimizing the a robust stability measure.

As stated in Rm. \ref{Rm:DissipativityAnalysis}, the proposed dissipativity-based spreading networks design approach can also be used to analyze the dissipativity of a given spreading network. Consequently, it can be used to analyze the $L_2$ gain level $\gamma$ of any given spreading network. We used this $\gamma$ level as a performance metric besides the time-averaged infection level (fraction) 
$$J_x \triangleq  \frac{1}{T}\int_0^T x(t)dt,$$ 
and the average interconnection reductions fraction $J_M$ as measures of performance when comparing different spreading network design methods. The observed performance metrics are summarized in Tab. \ref{Tab:Results}. 

Based on Tab. \ref{Tab:Results}, it is clear that the dissipativity-based spreading network design have superior (i.e., lower) $L_2$ gain $\gamma$ levels compared to other methods. In particular, DissBC($10^-9,1$) (i.e., when $c_M = 10^{-9}$ and thus, when the design objective is effectively $\gamma$) has obtained the lowest $\gamma$ value, at $\gamma = 24.10$. Interestingly, as can be seen in Figs. \ref{Fig:4a} and \ref{Fig:4c}, the average reduction of interconnections in this DissBC($10^{-9},1$) case is also the lowest, at $J_M = 0.8168$. This implies that it is not necessary to reduce interconnections significantly to obtain guaranteed theoretical robust stability measures. 

Based on both Fig. \ref{Fig:2} and Tab. \ref{Tab:Results}, dissipativity based spreading network design methods also have superior experimental average infections level performance (i.e., lower $J_x$ metrics). In particular, DissBC($10^9,1$) (i.e., when $c_M = 10^{9}$ and thus when the design objective is effectively the interconnections) has obtained the lowered $J_x$ value, at $J_x = 0.0638$. This is because, due to its focus on optimizing only the interconnections, the solution has drastically cutdown interconnections. This behavior is evident from Figs. \ref{Fig:4b} and \ref{Fig:4d} and also from the fact that this method achieves the highest $J_M$ values, at $J_M = 0.9575$. 

As shown in Fig. \ref{Fig:3}, in dissipativity based designs, by reducing the $\delta_M$ parameter from $1$ to $0.95$ and $0.9$ (enforcing stricter constraints on interconnection deviations), DissBC($1,0.95$) and Diss($1,0.9$) methods have sacrificed both experimental performance $J_x$ and theoretical performance $\gamma$. 

Considering these behaviors, it is clear that the selected baseline dissipativity-based design approach DissBC($1,1$) considers a moderate case where both the robust stability measure $\gamma$ and interconnection cost are treated equally, without forced constraints of interconnection deviations. As can be seen in Fig. \ref{Fig:1c} and \ref{Fig:1d}, this leads to a spreading network with a close to the best: (i) robust stability guarantee with $\gamma = 24.15$, (ii) experimental performance $J_x = 0.0380$, and (iii) interconnection reductions $J_M = 0.8710$ highlighting the benefit of co-designing networks while using a well-balanced multiple objective function.

Overall, the results indicate that the proposed dissipativity-based approach provides a structured and efficient means of spreading networks. Its robustness against asynchronous disturbances while using specifically designed minimal alterations to the nominal network further highlights its suitability for practical applications where system dynamics are complex and uncertain.

\begin{figure*}[!th]
\centering
\subfloat[Intra-group interconnections]{
\includegraphics[width=0.46\columnwidth]{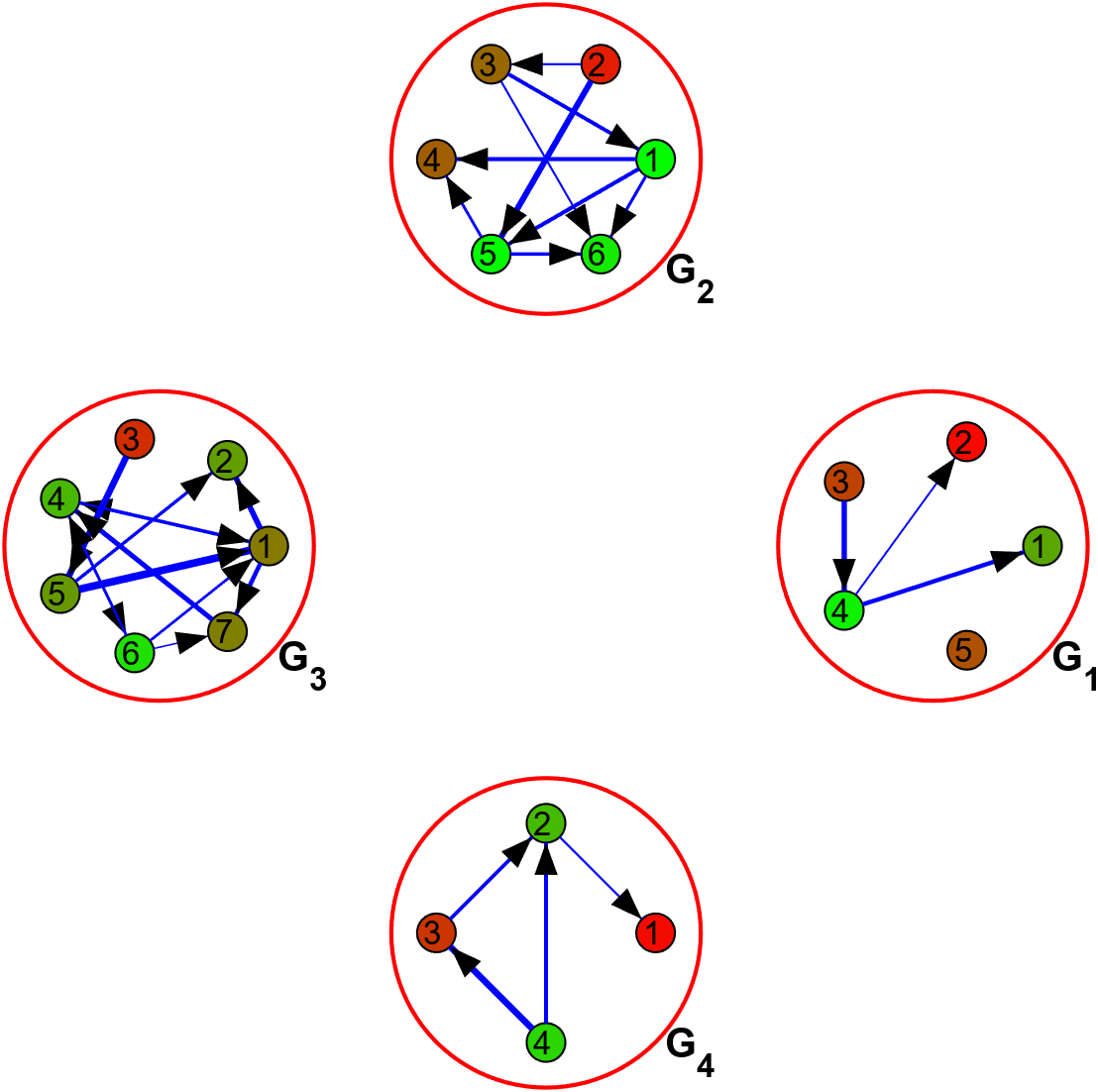}
\label{Fig:1a}}
\hfill
\subfloat[Inter-group interconnections]{
\includegraphics[width=0.46\columnwidth]{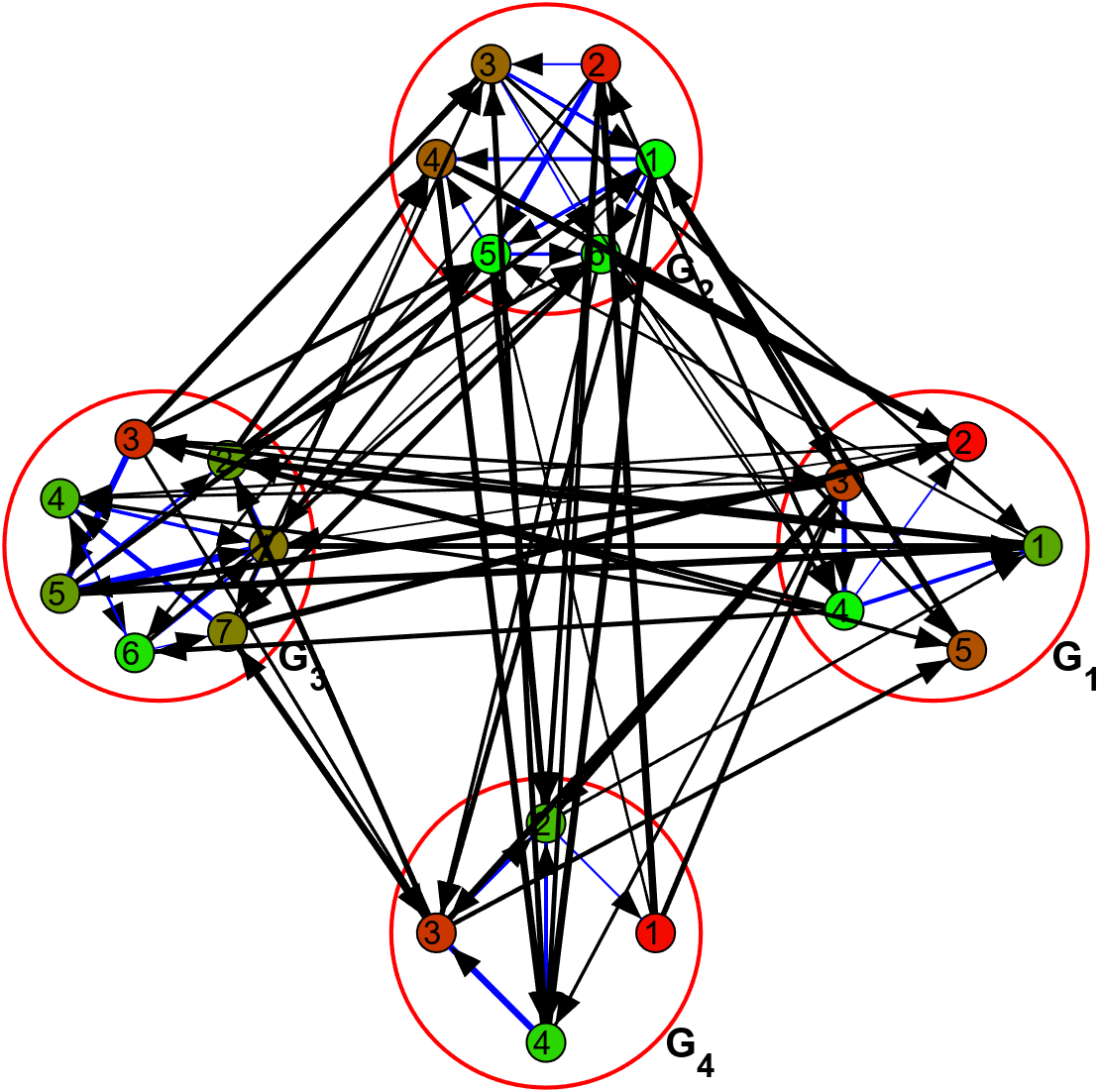}
  \label{Fig:1b}
}
\hfill
\subfloat[Dissipativity Based Design]{
\includegraphics[width=0.46\columnwidth]{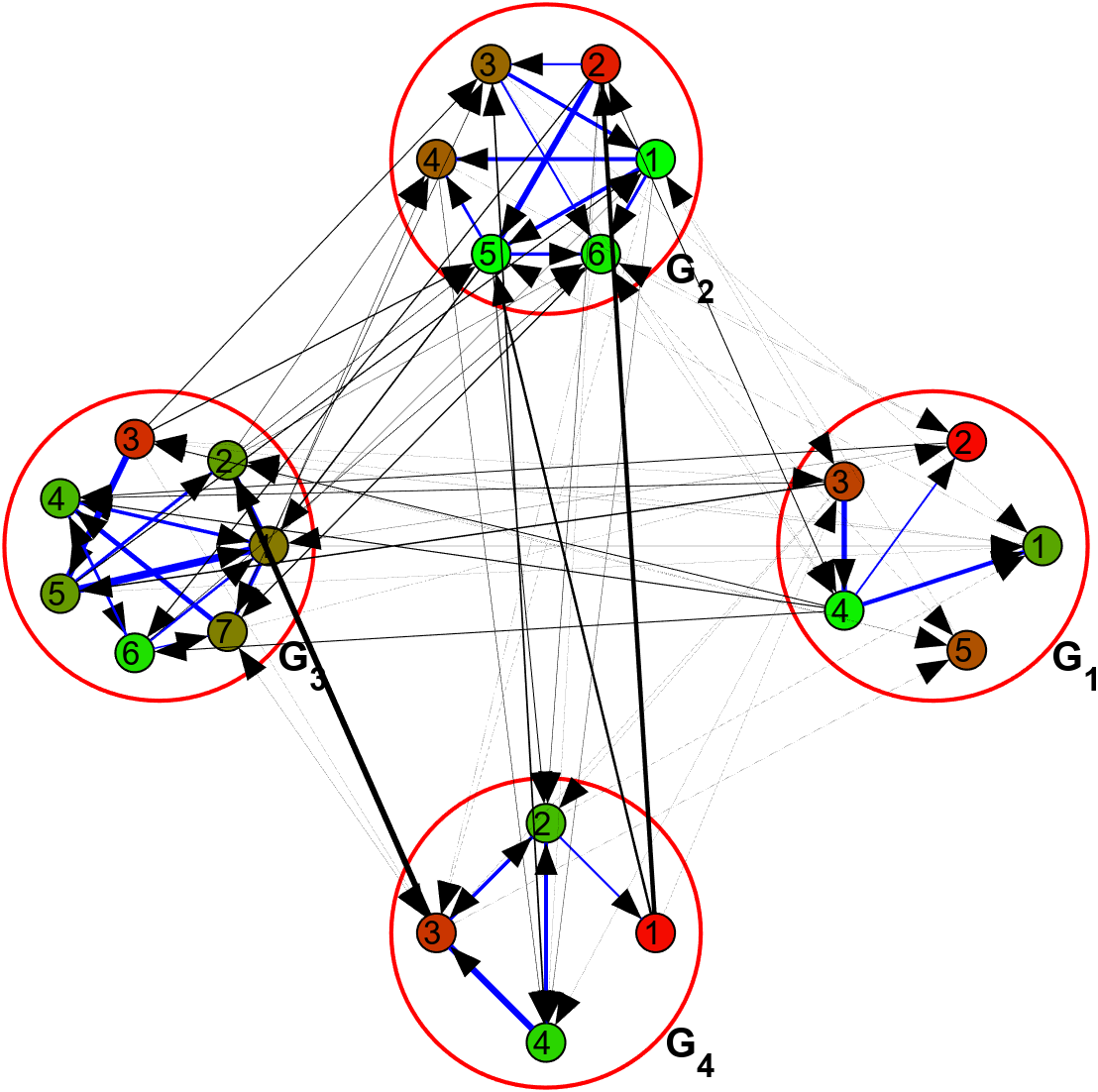}
  \label{Fig:1c}
}
\hfill
\subfloat[Interconnection Reductions]{
\includegraphics[width=0.46\columnwidth]{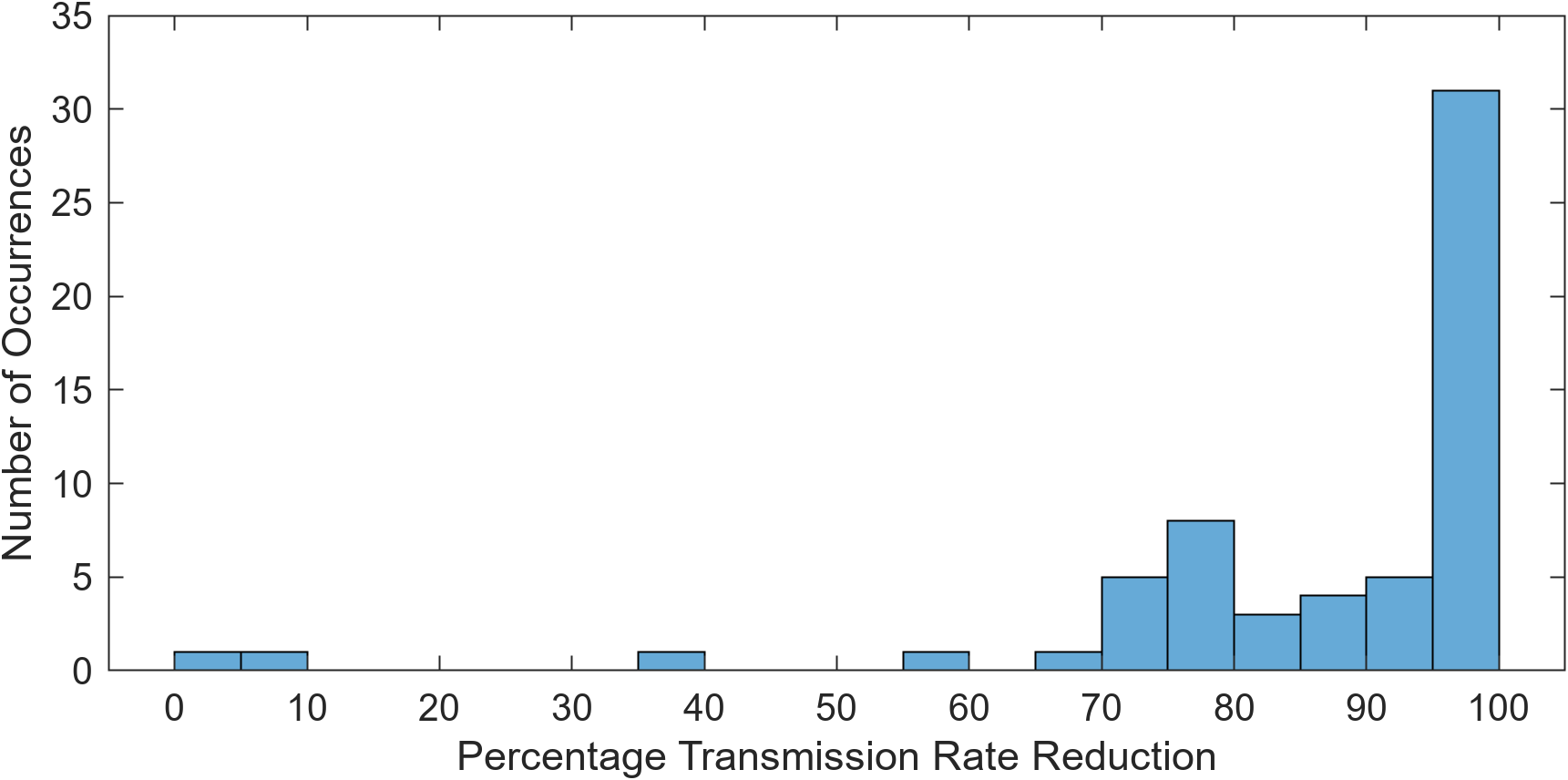}
  \label{Fig:1d}
}
\caption{Considered spreading network: (a) without and (b) with inter-group interconnections. Proposed dissipativity-based design DissBC($1,1$): (c) interconnection topology and (d) histogram of percentage interconnection reductions.}
\label{Fig:1}
\end{figure*}

\begin{figure*}[!th]
\centering
\includegraphics[width=0.95\textwidth]{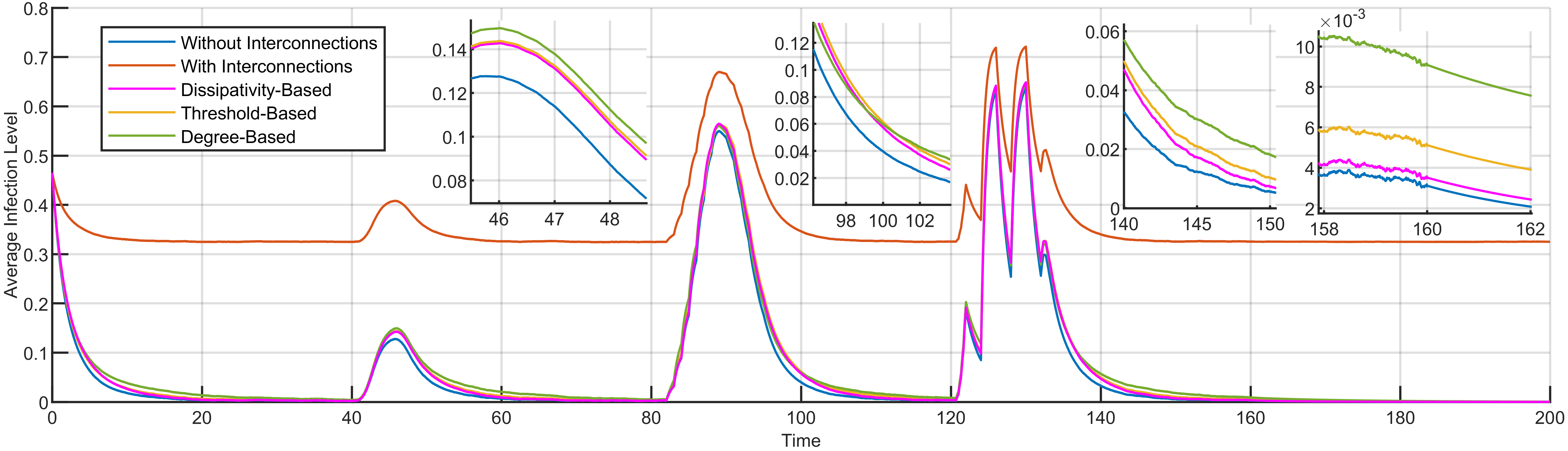}
\label{fig:2a}
\caption{Time evolution of average infection levels over the spreading network under different inter-group interconnection configurations: (1) No interconnections (blue), (2) Original uncontrolled network (orange), (3) Dissipativity-based design (magenta), (4) Threshold-based design (yellow), and (5) Degree-based design (green).}
\label{Fig:2}
\end{figure*}

\begin{figure}[!th]
\centering
\subfloat[DissBC($10^{-9},1$) Topology]{
  \includegraphics[width=0.45\columnwidth]{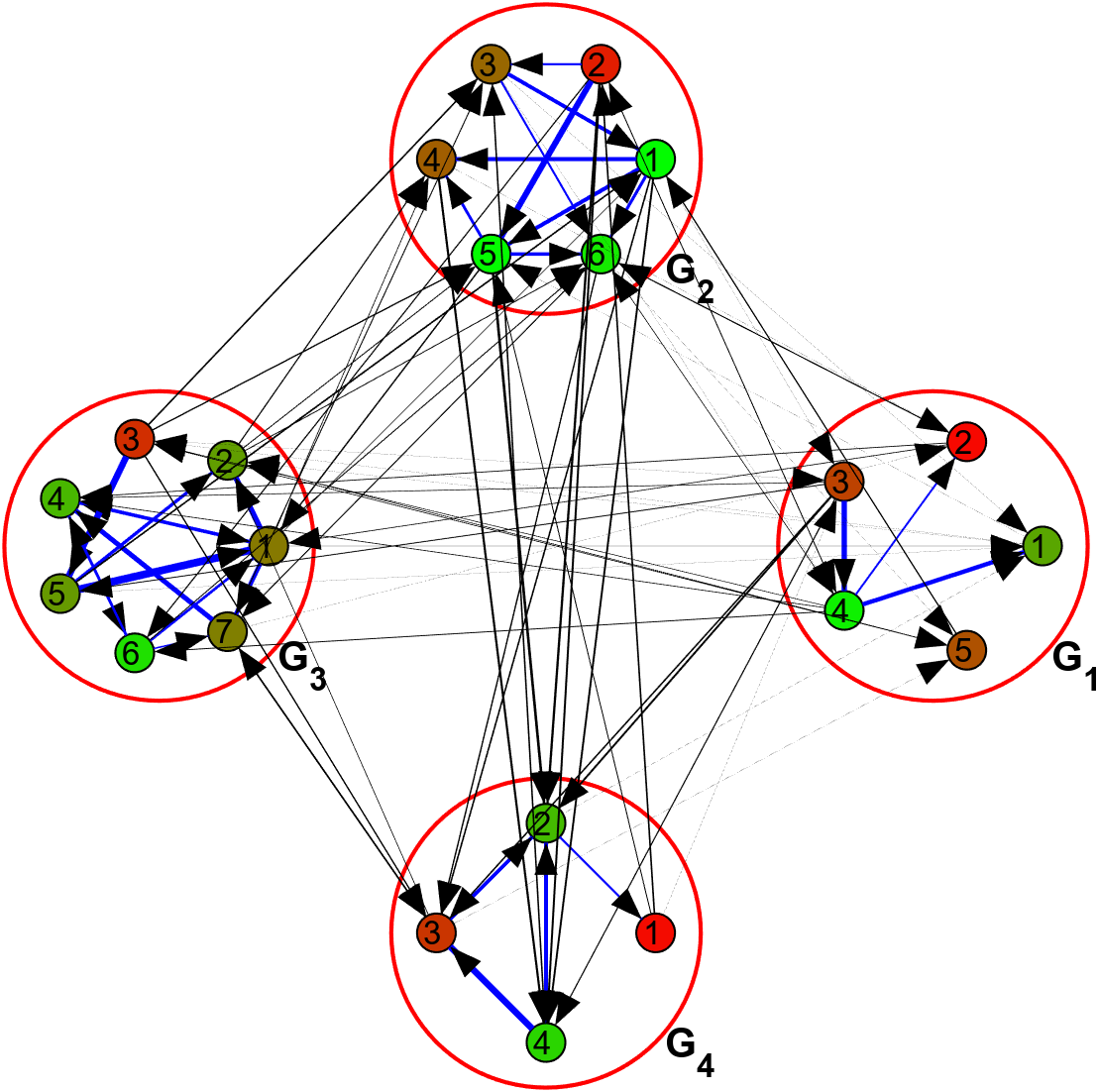}
  \label{Fig:4a}
}
\hfill
\subfloat[DissBC($10^9,1$) Topology]{
  \includegraphics[width=0.45\columnwidth]{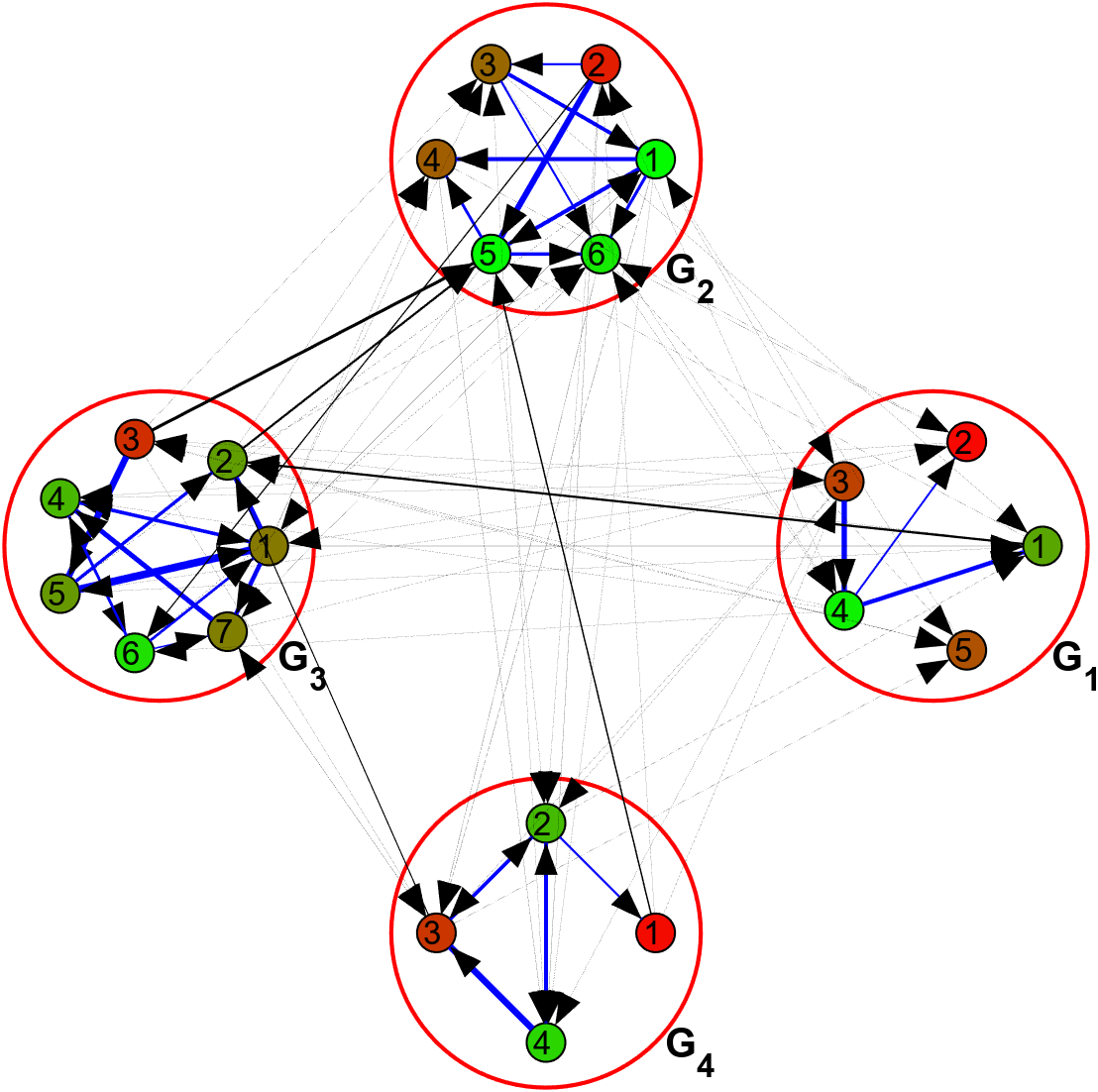}
  \label{Fig:4b}
}
\\[1ex]
\subfloat[DissBC($10^{-9},1$) Histogram]{
  \includegraphics[width=0.45\columnwidth]{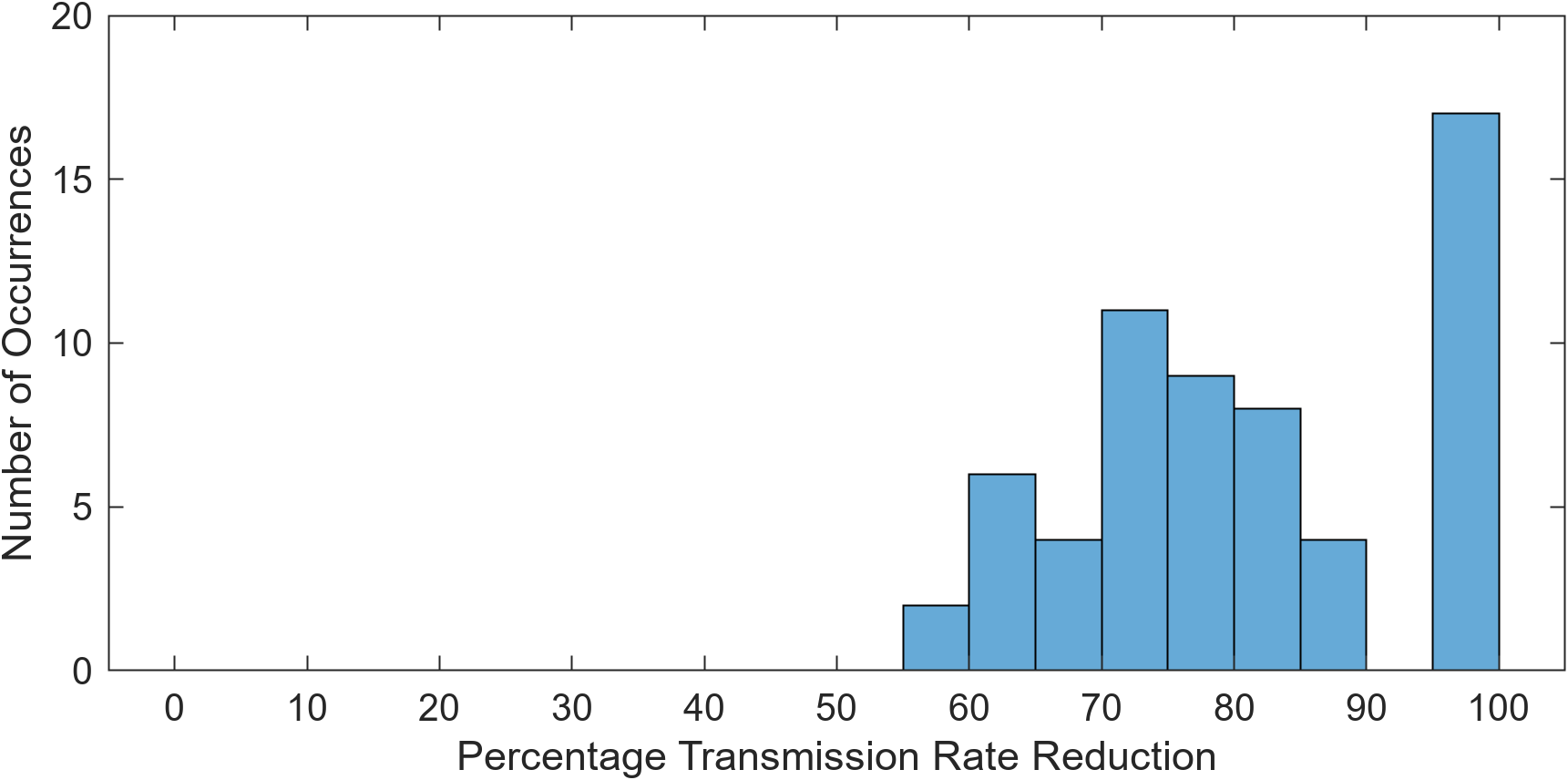}
  \label{Fig:4c}
}
\hfill
\subfloat[DissBC($10^9,1$) Histogram]{
  \includegraphics[width=0.45\columnwidth]{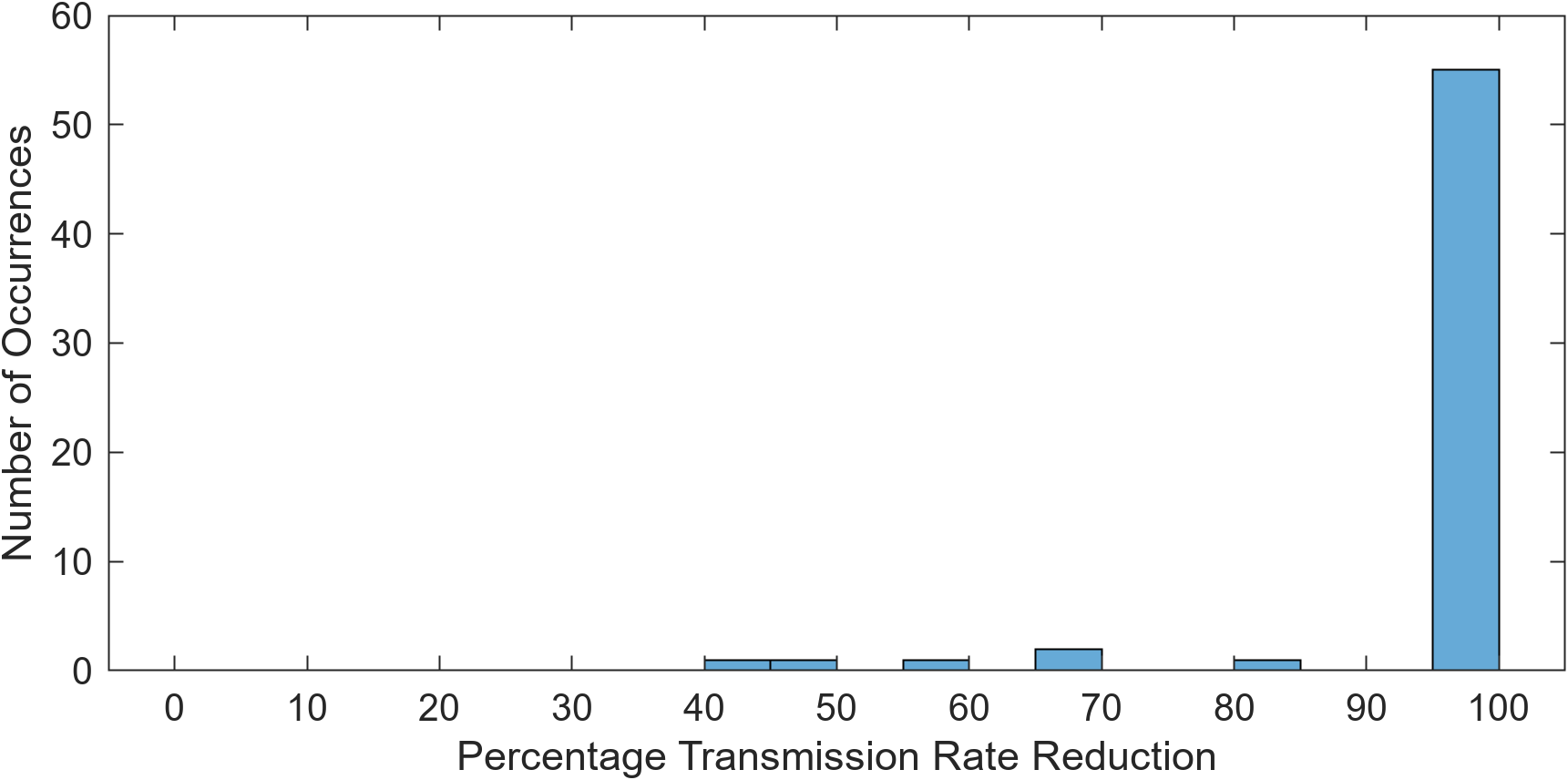}
  \label{Fig:4d}
}
\caption{Spreading network designs (top row) and their percentage interconnection reductions histograms (bottom row) observed under the proposed dissipativity-based design approach with parameters $c_M = 10^{-9}$ (left column) and $c_M = 10^{9}$ (right column) with $\delta_M = 1$.}
\label{Fig:Fig4}
\end{figure}

\begin{figure}[!th]
  \centering
  \subfloat[DissBC($1,0.95$) Topology]{
    \includegraphics[width=0.45\columnwidth]{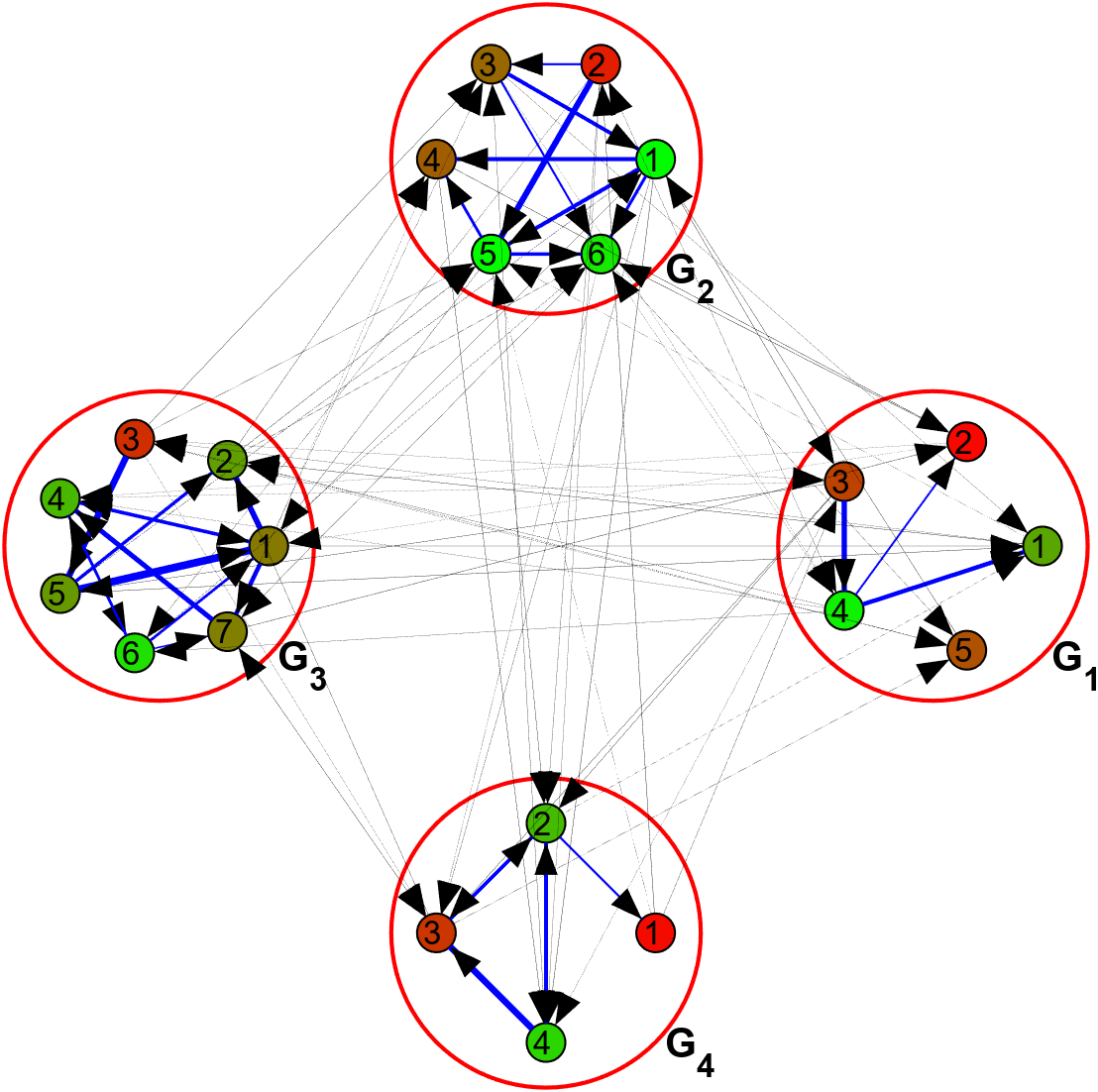}
    \label{Fig:3b}
  }
  \hfill
  \subfloat[DissBC($1,0.9$) Topology]{
    \includegraphics[width=0.45\columnwidth]{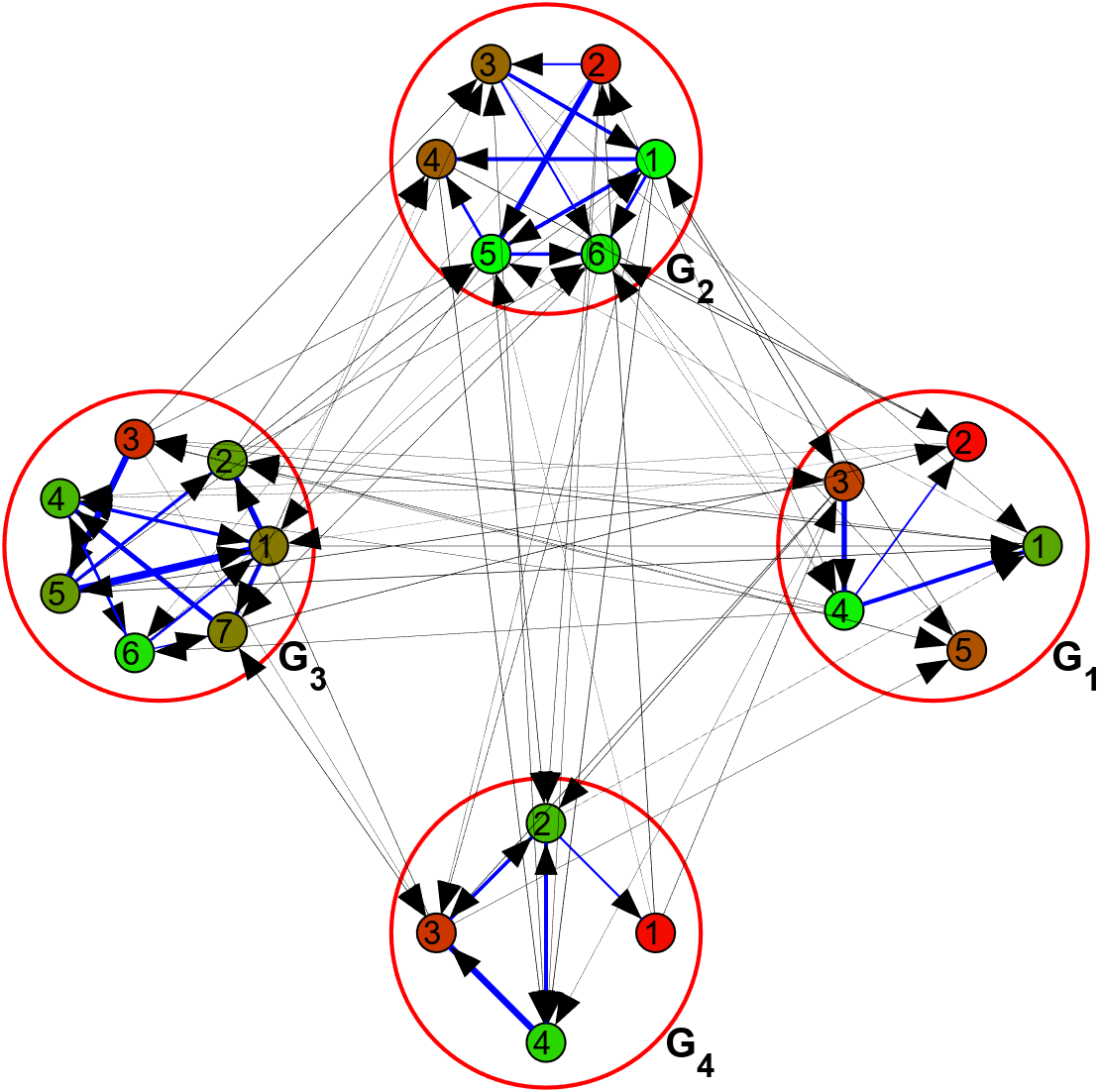}
    \label{Fig:3a}
  }
  \\[1ex]
  \subfloat[DissBC($1,0.95$) Histogram]{
    \includegraphics[width=0.45\columnwidth]{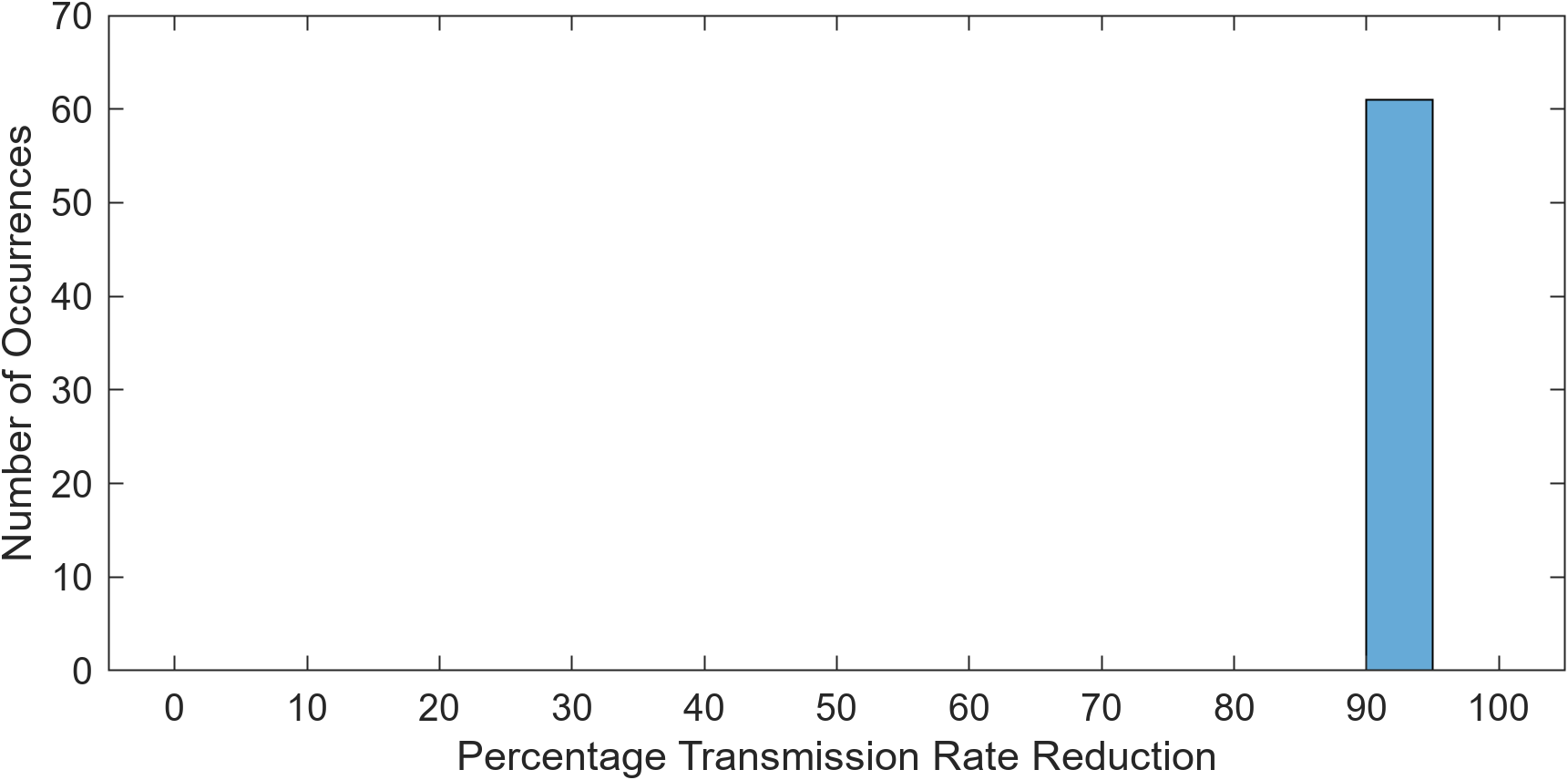}
    \label{Fig:3d}
  }
  \hfill
  \subfloat[DissBC($1,0.9$) Histogram]{
    \includegraphics[width=0.45\columnwidth]{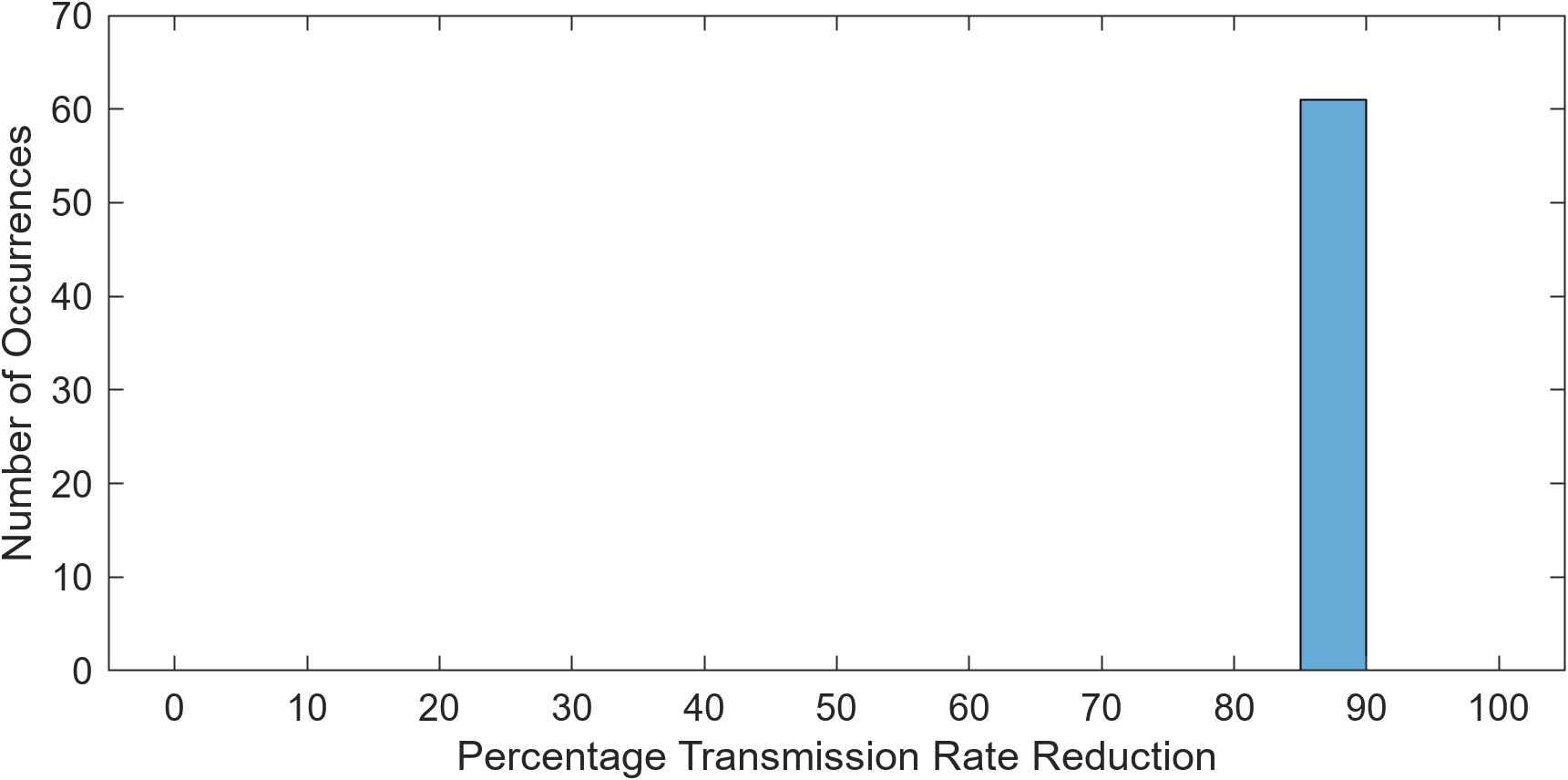}
    \label{Fig:3c}
  }
  \caption{Spreading network designs (top row) and their percentage interconnection reductions histograms (bottom row) observed under the proposed dissipativity-based design approach with parameters $c_M = 1$ with $\delta_M = 0.95$ (left column) and $\delta_M = 0.9$ (right column).}
  \label{Fig:3}
  \end{figure}

\begin{figure}[!th]
\centering
\subfloat[TBC($0.72$) Topology]{
  \includegraphics[width=0.45\columnwidth]{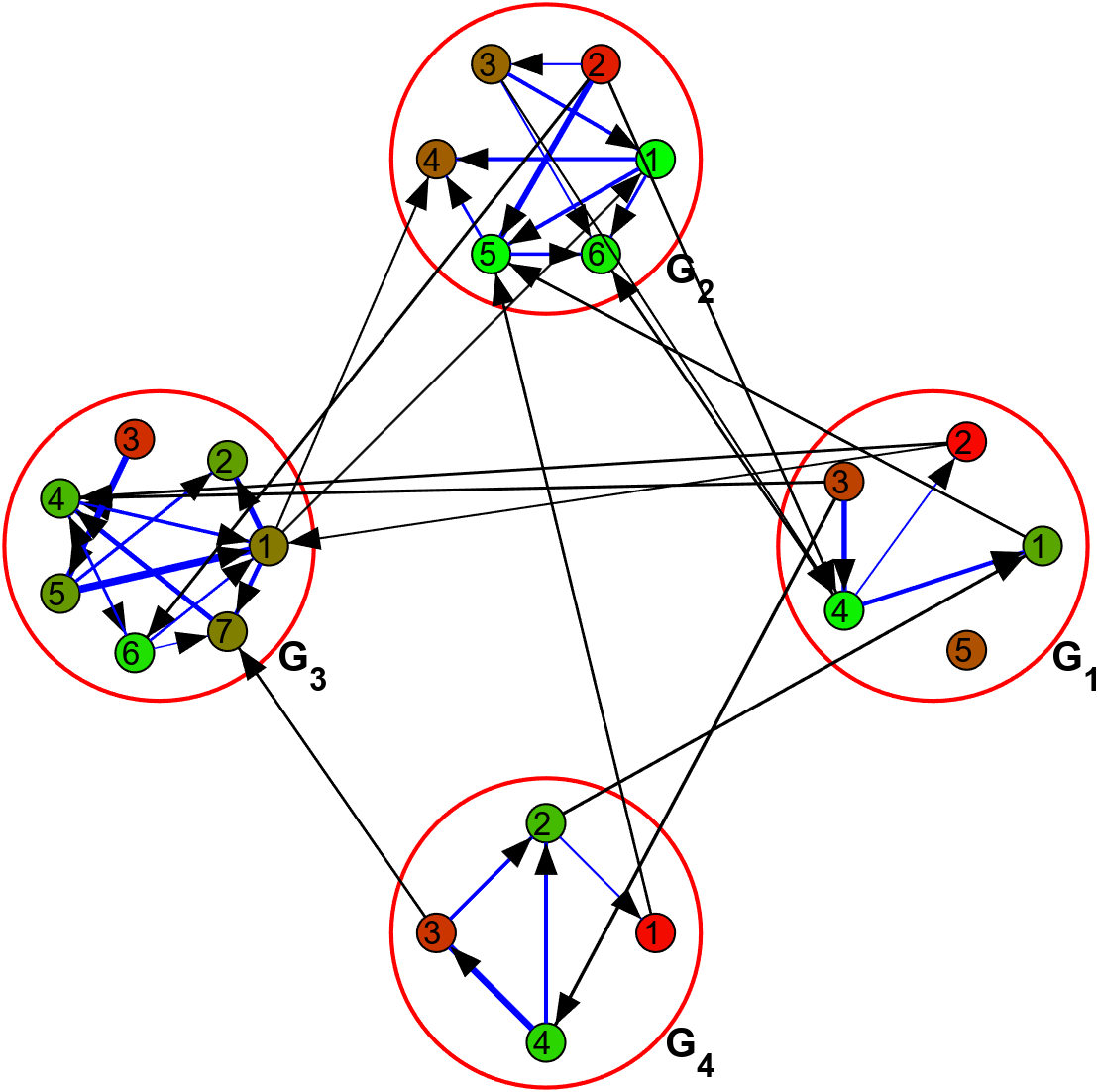}
  \label{Fig:5a}
}
\hfill
\subfloat[DegBC($0.18$) Topology]{
  \includegraphics[width=0.45\columnwidth]{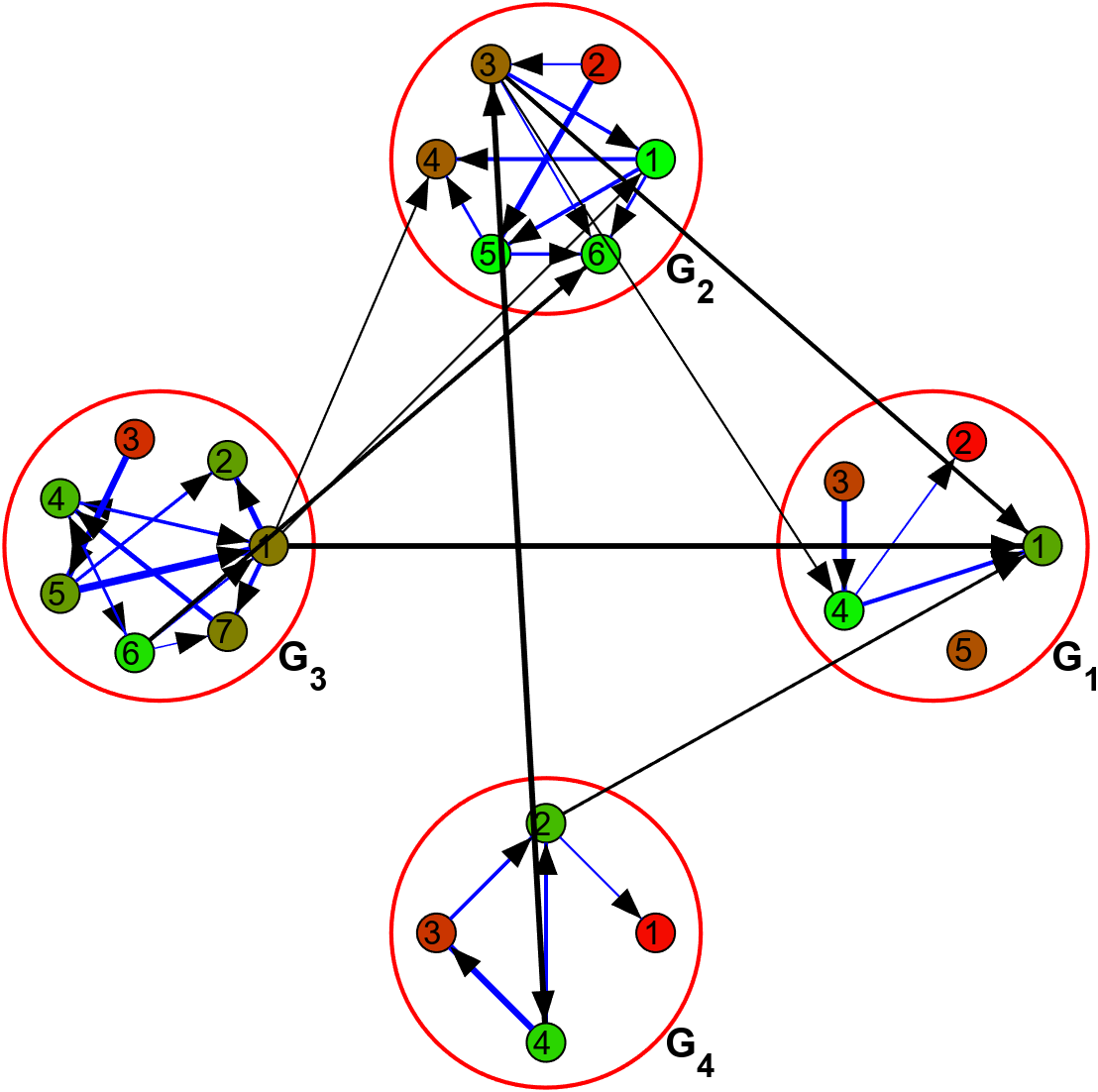}
  \label{Fig:5b}
}
\caption{Spreading network designs observed under: (a) threshold-based approach with the threshold parameter $t_M = 0.18$ and (b) degree-based approach with the degree parameter $d_M = 0.72$}
\label{Fig:5}
\end{figure}

\begin{table}[!th]
\centering
\caption{Observed performance metrics: $J_x$ (average infections), $J_M$ (average reductions), and $\gamma$ ($L_2$-gain) under different spreading network design techniques.}
\label{Tab:Results}
\begin{tabular}{| l | c | c | c | l |} 
\textbf{Design Method}    & $J_x$ & $J_M$ & $\gamma$ \\
Without interconnections  & 0.0616 & 1.0000 & -  \\
With Interconnections     & 0.3603 & 0.0000 & -  \\
DissBC($1,1$)             & 0.0680 & 0.8710 & 24.15 \\
DissBC($1,0.95$)          & 0.0645 & 0.9497 & 26.16 \\
DissBC($1,0.9$)           & 0.0677 & 0.9000 & 35.08 \\
DissBC($10^{-9},1$)       & 0.0709 & 0.8168 & 24.10 \\
DissBC($10^{9},1$)        & 0.0638 & 0.9575 & 30.70 \\
TBC($0.18$)               & 0.0700 & 0.8726 & -  \\
DegBC($0.72$)             & 0.0730 & 0.8854 & 249.6  
\end{tabular}
\end{table}

\section{Conclusion}
\label{Sec:Conclusion}
In this paper, we presented a dissipativity-based framework for controlling epidemic spreading over networked systems. By formulating the control problem using dissipativity theory, we established a systematic method for designing inter-group interconnections that ensure stability and robustness against disturbances. Comparative analysis with traditional control techniques, including threshold pruning and high-degree edge removal, demonstrates the superior performance of the proposed approach in reducing infection levels with lower control effort. Future work includes extending the framework to account for adaptive or time-varying networks and exploring its application to other spreading processes beyond epidemic control.

\bibliographystyle{IEEEtran}
\bibliography{References}

\end{document}